\newcounter{vol2}
\newcommand {\rr}{{\mathds R}}
\newcommand {\rrangle}{{\rangle\kern - 2.5pt \rangle}}
\newcommand {\llangle}{{\langle\kern - 2.5pt \langle}}
\newcommand{\restr}[1]{#1\bigr|_{t=0}}
\newcommand{\abs}[1]{|#1|}
\newcommand{\babs}[1]{\bigl|#1\bigr|}
\newcommand {\init}[1]{{\mathring{#1}}}
\newcommand{\II}{{\init{I\kern - 2.0pt I}}{}}
\newcommand{\IIfull}{{I\kern - 2.0pt I}{}}
\newcommand{\slice}{\mathscr{S}}
\newcommand{\D}{\mathbf{D}}
\newcommand{\G}{\mathbf{\Gamma}}
\newcommand{\hT}{\widehat{T}}
\newcommand{\hn}{\widehat{\nabla}}
\newcommand{\ig}{\init{g}{}}
\newcommand{\inab}{\init{\nabla}{}}
\newcommand{\RRiem}{\mathbf{Rm}}
\newcommand{\im}{\mathrm{im}}
\newcommand{\dom}{\mathrm{domain}}
\newcommand{\id}{\mathrm{id}}
\newenvironment{proof}
{\noindent\textbf{Proof:}\\ } {\hfill \opensquare \\}
    \newtheorem{thm}{Theorem}[section]
    \newtheorem{lem}[thm]{Lemma}
    \newtheorem{prop}[thm]{Proposition}
    \newtheorem{cor}[thm]{Corollary}
    \newtheorem{defn}[thm]{Definition}
    \newtheorem{rem}[thm]{Remark}
\begin{document}
\title{The Cauchy Problem for Membranes}
\author{Olaf Milbredt} 
\address{
 Max Planck Institute for Gravitational Physics, 
Am M\"uhlenberg 1,
14476 Potsdam, Germany
 }
\ead{Olaf.Milbredt@aei.mpg.de}

\begin{abstract}
  We show existence and uniqueness for timelike minimal submanifolds
  (world volume of $p$-branes)
  in ambient Lorentz manifolds admitting a time function in a neighborhood
  of the initial submanifold. The initial value
  formulation introduced and the conditions imposed on the initial data
  are given in purely geometric terms.
  Only an initial direction must be prescribed
  in order
  to provide uniqueness for the geometric problem.
  The result covers non-compact initial submanifolds of any codimension.
  By considering the angle of the initial direction and vector fields
  normal to the initial submanifold with the unit normal to the
  foliation given by the time function we obtain a quantitative description
  of ``distance'' to the light cone. This description makes it possible
  to treat initial data which are arbitrarily close to the light cone.
  Imposing uniform assumptions give a lower bound for a notion of
  ``time of existence'' depending only on geometric quantities involving
  the length of timelike curves lying in the solution.
\end{abstract}

\ams{35L15, 35L70, 81T30}
\submitto{\CQG}

\maketitle

\section{Introduction}
Membranes are spacelike submanifolds evolving in a Lorentzian manifold.
The equation of motion of a membrane is determined by 
the condition that the world volume of the membrane, the timelike submanifold 
swept out
during  time evolution, is a critical
point for the volume functional induced by the ambient manifold.
Membranes arise in the context of higher-dimensional extensions of
String Theory, where 
they are called $p$-branes according to the dimension of the spacelike object.
The Euler-Lagrange equation of the volume functional is the vanishing of the 
mean extrinsic curvature vector %
of the 
world volume. %
This equation will be called the membrane equation in
the sequel.
Due to the signature of the world volume the membrane  equation
leads to a system of nonlinear wave equations.
Since the membrane equation %
 is invariant under 
diffeomorphisms of the world volume, it %
is degenerate.

In this paper we consider the solvability of the 
initial value problem (IVP) %
for the membrane equation with %
initial data %
only involving geometric
quantities. Our focus is on providing short-time existence and uniqueness
for the Cauchy problem admitting non-compact initial data, and data which are 
arbitrarily close to the light cone.
The Cauchy problem for closed strings, i.\,e.\ a membrane
diffeomorphic to the circle line, has been studied 
in Minkowski space and in globally hyperbolic ambient manifolds
by
T.\ Deck \cite{Deck:1994}, %
and O.\ M\"uller \cite{Muller:2007}. %
Solutions are obtained as timelike immersions in a neighborhood of the initial
data. In these works the question of global-in-time existence 
was investigated for globally hyperbolic ambient manifolds. 
Exact solutions of this equation
corresponding to pulsating and rotating objects
have been studied by H.\ Nicolai and J.\ Hoppe in \cite{NiHo:1987}.

In the case of the world volume in Minkowski space represented as a graph,
global existence for small
initial data was shown in \cite{Lind:2004}. 
This result was generalized to arbitrary codimensions in \cite{AA:2006}.

Using techniques previously applied to the Einstein equation
by D.\ 
Christodoulou %
and S.\ Klainerman (cf.\ \cite{CHRKLA:1993}), %
S.\ Brendle investigated in \cite{Brendle:2002}
the stability of a
hyperplane satisfying the membrane equation %
in Minkowski space.
In contrast to the work of H.\ Lindblad (cf.\ \cite{Lind:2004}), 
S.\ Brendle requires less regularity
of the initial data, comparable to that in \cite{CHRKLA:1993}.

Let us state the main problem addressed by this paper in non-technical
terms. \\
Let $N^{n+1}$ be an $(n+1)$-dimensional Lorentzian manifold,  
  and let $\Sigma_0$ be an $m$-dimensional spacelike
  submanifold of $N$. %
  Assume $\nu$ to be an unit timelike future-directed vector field normal to
  $\Sigma_0$. %
  \begin{center}
  \parbox{0.93\textwidth}{
   \textbf{Existence} \\
  Find an open $(m + 1)$-dimensional %
  timelike submanifold $\Sigma$ of $N$
  satisfying
  \begin{equation}
    \label{eq:geom_problem}
    H(\Sigma) \equiv 0,~ \Sigma_0 \subset \Sigma,
    \mbox{ and }\nu \mbox{ is tangential to }\Sigma.
  \end{equation}
  \textbf{Uniqueness} \\
  Show that for  $\Sigma_1$ and $\Sigma_2$ solving the IVP 
  \eref{eq:geom_problem}, 
  there exists an open set $V$ with
  $\Sigma_0 \subset V \subset N$ such that 
  \begin{equation*}
    V \cap \Sigma_1 = V \cap \Sigma_2.
  \end{equation*} 
}
\end{center}  
We will refer to $N, ~ \Sigma_0$, and $\nu$ as ambient manifold, initial
submanifold, and initial direction, respectively.
The main result of this work is an affirmative answer to the problem above
under suitable conditions. Here, the ambient manifold is assumed to admit
a time function in a neighborhood of the initial submanifold.
No further condition on the causal structure is required.

After fixing some notation in section \ref{sec:notation}, we perform the 
reduction of the  membrane
equation to a system of quasilinear
hyperbolic equations by employing  a gauge used in the context of the
Einstein equations (cf. \cite{FriRe:2000}).
In section \ref{indpar} we consider solutions of the membrane equation given
by parametrized immersions. In section \ref{main} these local
solutions are glued together to obtain a solution to the geometric Cauchy 
problem \eref{eq:geom_problem}.
The main existence result, Theorem \ref{thm:main_ex}, shows that
for non-compact regularly immersed submanifolds of any codimension in
an ambient manifold of any dimension, existence and uniqueness
holds. Uniform assumptions lead to a lower bound on the length of timelike 
curves lying in the world volume, which gives rise to a geometric notion of
time of existence of a solution to the Cauchy problem \eref{eq:geom_problem}.
In Corollary \ref{cor:main_ex} it is shown that smooth  %
data lead
to smooth solutions of the membrane equation.
In Theorem \ref{thm:main_uni} we present the main uniqueness result for the 
membrane equation in the case of smooth data.

The results are taken from the author's Ph.D. thesis; for further 
reference see
\cite{Mil:2008}.

\label{sec:prel}
\section{Notation}
\label{sec:notation}
Let $N^{n+1}$ be an $(n+1)$-dimensional  smooth manifold endowed with
a Lorentzian metric $h$. 
The Levi-Civita connection with respect to $h$ and the corresponding Christoffel
symbols are denoted by $\D$ and
$\G$, respectively. 
Covariant derivatives of order $\ell$ 
are denoted by $\D^{\ell}$.
The $(1,3)$ or the $(0,4)$-version 
of the curvature of $h$ is denoted by $\RRiem$.

Suppose $\Sigma^{m+1}$ is an $(m+1)$-dimensional timelike submanifold
of $N$.
The metric and connection induced on $\Sigma$ are then given by
\begin{equation*}
  g := h\bigr|_{\Sigma} \quad\mbox{and}\quad
  \nabla_X Y := (\D_X Y)^{\top} \mbox{ for vector fields } X, Y \mbox{ tangent
    to }
  \Sigma.
\end{equation*}
The Christoffel symbols of $\nabla$ are denoted by $\Gamma$.
The second fundamental form and the mean curvature of $\Sigma$ are given by
\begin{equation}
  \label{eq:2nd_mean}
  \IIfull(X,Y) := (\D_X Y)^{\bot}\qquad\mbox{and}\qquad H = %
      \tr_g \IIfull.
\end{equation}
The usual definition of the mean curvature involves a factor 
dependent on the 
dimension, which was omitted here since we are only interested in the 
homogeneous equation.

The following definitions introduce types of initial submanifolds.
We refer to them as the \emph{immersion type} of the initial submanifold.
\begin{defn}
  \label{defn:immersed_smf}
  Let $M^m$ be an $m$-dimensional manifold, and let $\varphi: M \rightarrow N$
  be an immersion. The image of $\varphi$, $\Sigma_0:= \im \varphi$,
  is called a \emph{regularly immersed submanifold of $N$}.
\end{defn}
\begin{defn}
  \label{defn:loc_emb}
  Let $\Sigma_0$ be a regularly immersed submanifold of dimension $m$
  with immersion
  $\varphi: M^m \rightarrow N$.
  The set
  $\Sigma_0$ is called a \emph{locally embedded submanifold} if
  for every point $q \in \Sigma_0$ there exist open sets
  $q \in V \subset N$ and $U \subset M$ such that
  \begin{equation}
    \label{eq:defn_loc_emb}
    \varphi: U \rightarrow \varphi(U) \mbox{ is a diffeomorphism and} \quad
    \varphi^{-1}(V \cap \Sigma_0) = U.
  \end{equation}
\end{defn}
\begin{defn}
  \label{defn:fin_loc_emb}
  Let $\Sigma_0$ be a regularly immersed submanifold of dimension $m$
  with immersion
  $\varphi: M^m \rightarrow N$.
  The set
  $\Sigma_0$ is called a \emph{regularly immersed submanifold with finite 
    intersections} if
  for every point $q \in \Sigma_0$ there exist a neighborhood
  $V \subset N$ of $q$ and finitely many open disjoint sets $U_{\ell} \subset M$ 
  such that
  \begin{equation}
    \label{eq:defn_fin_loc_emb}
    \fl \varphi: U_{\ell} \rightarrow \varphi(U_{\ell}) 
    \mbox{ is a diffeomorphism for every $\ell$ and} \quad
    \varphi^{-1}(V \cap \Sigma_0) = \bigcup U_{\ell}.
  \end{equation}
\end{defn}
Let $\Sigma_0$ be a regularly immersed submanifold with immersion
$\varphi:M^m \rightarrow \Sigma_0 \subset N$. We make use of the 
induced metric $\ig = \varphi^{\ast} h$ on $M$.
The Levi-Civita-connection compatible with
$\ig$ %
is 
denoted by
$\inab$.
The connection on the pull-back bundle $\varphi^{\ast} TN$ is denoted
by $\hn$, and members of $\varphi^{\ast} TN$ are called 
\emph{vector fields along $\varphi$}.
Using this notation we introduce the \emph{second fundamental form of $\varphi$}
defined as the vector field
\begin{equation}
  \label{eq:init_2nd}
  \II(X,Y) = (\hn_X d\varphi(Y))^{\bot}_{\im \varphi},
\end{equation}
where $d\varphi$ denotes the differential of $\varphi$, and
$(\,.\,)^{\bot}_{\im \varphi}$ denotes the part normal to the image of 
$\varphi$.
If $\Sigma_0$ is a locally embedded submanifold
then this definition is independent of the local embedding and gives
the usual definition of second fundamental form of a submanifold.

Let $E$ be a Riemannian metric defined on $N$. For a vector field $\xi$ along 
$\varphi$ we introduce the following
norm
\begin{equation}
  \label{eq:gE_norm}
  \abs{\hn^{\ell} \xi}_{\ig,E}^2 = 
  \ig^{i_1 j_1} \cdots \ig^{i_{\ell} j_{\ell} }E_{AB} \hn_{i_1, \dots, i_{\ell} }\xi^A
  \hn_{j_1, \dots, j_{\ell}} \xi^B
\end{equation}
with the abbreviation $\hn_{i_1, \dots, i_{\ell} } = \hn_{i_1} \cdots \hn_{i_{\ell}}$.
Here, we used the following set of indices for coordinates.
On $M$ and on the initial submanifold $\Sigma_0$, coordinates
carry small Latin indices running from $1$ to $m$.
Capital Latin indices as $A,B,C,\dots$ 
will run from $0$ to $n$, and indicate coordinates on $N$. 
Our convention for the signature of a Lorentzian metric is 
$(\,{-}\,{+}\,\cdots\,{+}\,)$.
Partial derivatives in coordinates are abbreviated by 
$\partial_A = \partial/\partial y^A$. %
Local coordinates on the timelike submanifold $\Sigma$ carry
Greek indices running from $0$ to $m$.

An assumption on the causal structure of the ambient manifold 
will be made by using the following terminology.
\begin{defn}
  A real-valued function $\tau$ on $N$ is called \emph{time function}
  if its gradient $\D \tau$ is everywhere timelike. 
  Such a time function induces a \emph{time foliation} of $N$ by 
  its levelsets $\slice_{\tau}$, which are spacelike 
  hypersurfaces. %

  We further introduce the \emph{lapse $\psi$ of the foliation}
given by $\tau$
by 
\begin{equation}
  \label{eq:def_lapse}
  \psi^{-2} = - h(\D \tau, \D \tau),
\end{equation}
and the future-directed unit normal to the time foliation
is given by
\begin{equation*}
  \hT = - \psi \D\tau.
\end{equation*}
\end{defn}
The dual vector field $\partial_{\tau}$ to 
the differential $d\tau$ of the time function is given by
\begin{equation*}
  \partial_{\tau} = - \psi^2 \D\tau \quad\Rightarrow\quad \partial_{\tau}
  = \psi \hT.
\end{equation*}
The existence of such a function provides us with a possibility
of introducing a Riemannian
metric.
\begin{defn}
  \label{defn:def_E}
  Suppose $N$ admits a time function $\tau$.
  We consider the 
  decomposition into tangential and normal part with respect to the slices,
  denoted by $(\,.\,)^{\top}$ and $(\,.\,)^{\bot}$, respectively.
  We introduce a Riemannian metric $E$ associated to the time foliation
  by a change of the sign of the unit normal $\hT$ as follows %
  \begin{equation}
    \label{eq:def_E}
    E(v,w) := h\bigl(v^{\top}, w^{\top}\bigr) + \lambda \mu, \quad\mbox{where }
    v^{\bot} = \lambda \hT\mbox{ and } w^{\bot} = \mu \hT.
  \end{equation}
  Since the slices are spacelike this metric is Riemannian.
\end{defn}
We use the norm on tensors induced by the metric $E$ defined as follows
\begin{equation*}
  \abs{T}^2_E = E^{A_1 C_1} \cdots E^{A_k C_k} \, E_{B_1 D_1} \cdots E_{B_{\ell} D_{\ell}}\,
    T_{A_1, \dots, A_k}^{B_1, \dots, B_{\ell}} \, T_{C_1, \dots, C_k}^{D_1, \dots, D_{\ell}}.
\end{equation*}

\section{Reduction}
\label{sec:reduction}
In this section we present a method to obtain a strictly hyperbolic
system from the membrane equation by using parametrizations and a specific
gauge condition.

Let $F:W \subset \rr \times M \rightarrow (N,h)$ denote an immersion
with induced metric $g:= F^{\ast} h$. We denote the Christoffel symbols
with respect to $g$ by $\Gamma$. The image of $F$ will correspond to 
a solution $\Sigma$ of the geometric IVP \eref{eq:geom_problem}. 

Let $x^{\mu}$
be coordinates on $ \rr\times M$, and let $y^A$ be coordinates on $N$.
We denote the covariant derivative on
the pullback bundle $F^{\ast}TN$ by $\hn^F$.
The membrane equation for the image of $F$ then reads in this situation  
  \begin{eqnarray}
    \label{eq:mem_geom}
    &&g^{\mu\nu}\hn^F_{\mu} \partial_{\nu} F - g^{\mu\nu} \bigl(\hn^F_{\mu}
  \partial_{\nu} F \bigr)_{\im F}^{\top}  = 0 %
  \\ 
  \label{eq:membrane}
  \fl\mbox{or, equivalently} \quad&& %
  g^{\mu\nu} \partial_{\mu} \partial_{\nu} F^A  + g^{\mu\nu} \partial_{\mu} F^B
  \partial_{\nu} F^C \G_{BC}^A(F) - 
  \Gamma^{\lambda} \partial_{\lambda} F^A  = 0,
\end{eqnarray}
where $(\,.\,)^{\top}_{\im F}$ denotes the part tangential to the image of $F$.

The hyperbolicity of \eref{eq:membrane} comes from the 
signature of the metric induced by $F$ on $W$. 
The degeneracy of \eref{eq:membrane} is a consequence of the
invariance %
under tangential diffeomorphisms of a solution.
It becomes manifest in the contracted Chrisoffel symbols induced on $W$,
which contain a term involving second-order derivatives of 
$F$ cancelling
the tangential part of the
leading second-order term.

To overcome the problem of degeneracy %
it is necessary
to fix a gauge.
We consider the so-called harmonic map gauge taken from \cite{FriRe:2000},
where it is used to show well-posedness of %
the Einstein equations.

Recall that a map $v: (M_1, g_1) \rightarrow (M_2, g_2)$ between two 
pseudo-Riemannian manifolds is called
\emph{harmonic map} (or \emph{wave map}) if it satisfies 
\begin{equation}
  \label{eq:harm_map}
  g_1^{ij} \partial_i \partial_j v^a - g_1^{ij} \Gamma(g_1)_{ij}^k \partial_k v^a
  + g_1^{ij} \partial_i v^b
  \partial_j v^c \Gamma(g_2)_{bc}^a = 0.
\end{equation}
\begin{defn}
  A solution $F: W \subset \rr \times M \rightarrow N$ of the membrane 
  equation \eref{eq:membrane}
  is in \emph{harmonic map gauge},
  if the following condition is satisfied
  \begin{equation}
    \label{eq:harm_cond}
    \id: \bigl(W, g = F^{\ast} h\bigr) 
    \rightarrow \bigl(W, \hat{g}\bigr)
    \quad\mbox{is a harmonic map,}
  \end{equation}
  where $\hat{g}$ is a fixed background metric. %
  We call this condition  \emph{harmonic map gauge condition}.
\end{defn}
\begin{rem}
  In coordinates, condition \eref{eq:harm_cond} reads 
\begin{equation}
  \label{eq:harm_cond_coord}
  g^{\mu\nu}(\Gamma_{\mu\nu}^{\lambda} - 
\hat{\Gamma}_{\mu\nu}^{\lambda}) = 0.
\end{equation}
Note that this is the trace of the difference of two connections,
and is therefore independent of coordinates.
\end{rem}
By inserting  the harmonic map gauge condition \eref{eq:harm_cond_coord}
into \eref{eq:membrane} we obtain the \emph{reduced membrane
equation}
\begin{equation}
  \label{eq:mem_red}
  g^{\mu\nu} \partial_{\mu} \partial_{\nu} F^A  + g^{\mu\nu} \partial_{\mu} F^B
  \partial_{\nu} F^C \G_{BC}^A(F) - 
  g^{\mu\nu}\hat{\Gamma}_{\mu\nu}^{\lambda} 
  \partial_{\lambda} F^A = 0.
\end{equation}
The reduced equation is strictly hyperbolic since %
the contracted Christoffel symbols  $g^{\mu\nu} \Gamma_{\mu\nu}^{\lambda}$
are replaced by a lower-order term.
We now show equivalence of the equations \eref{eq:mem_red} and 
\eref{eq:membrane}.
\begin{lem}
  \label{lem:equiv}
  The membrane equation \eref{eq:membrane} together with condition 
  \eref{eq:harm_cond_coord} is equivalent to \eref{eq:mem_red}.
\end{lem}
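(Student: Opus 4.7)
The plan is essentially algebraic: equations \eqref{eq:membrane} and \eqref{eq:mem_red} differ in exactly one term, and the gauge condition \eqref{eq:harm_cond_coord} asserts precisely that this term has the same value in both formulations. Specifically, in \eqref{eq:membrane} the symbol $\Gamma^{\lambda}$ stands for the contracted Christoffel symbol $g^{\mu\nu}\Gamma_{\mu\nu}^{\lambda}$ of the induced metric $g = F^{\ast} h$; it arises when one expresses the tangential projection $g^{\mu\nu}(\hn^F_\mu \partial_\nu F)^{\top}_{\im F}$ in the coordinate frame $\partial_\lambda F$ of $T\Sigma$, using the identity $(\hn^F_\mu \partial_\nu F)^{\top}_{\im F} = \Gamma^{\lambda}_{\mu\nu}\,\partial_\lambda F$. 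Thus the first step is to make this identification precise, so that the term $-\Gamma^{\lambda}\partial_\lambda F^A$ in \eqref{eq:membrane} is recognised explicitly as $-g^{\mu\nu}\Gamma^{\lambda}_{\mu\nu}\partial_\lambda F^A$.

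Once that bookkeeping is done, the forward direction is immediate: assume \eqref{eq:membrane} and \eqref{eq:harm_cond_coord}. The gauge condition permits the replacement $g^{\mu\nu}\Gamma^{\lambda}_{\mu\nu} = g^{\mu\nu}\hat{\Gamma}^{\lambda}_{\mu\nu}$ in the last term of \eqref{eq:membrane}, turning it into \eqref{eq:mem_red}. Conversely, if \eqref{eq:mem_red} and \eqref{eq:harm_cond_coord} hold, the same substitution, now reversed, converts \eqref{eq:mem_red} back into \eqref{eq:membrane}. Subtracting the two field equations termwise in fact shows directly that their difference equals $g^{\mu\nu}\bigl(\Gamma^{\lambda}_{\mu\nu} - \hat{\Gamma}^{\lambda}_{\mu\nu}\bigr)\partial_\lambda F^A$; since $F$ is an immersion, the vectors $\partial_\lambda F$ are linearly independent in $T_{F(x)}N$, so this difference vanishes for every $A$ if and only if $g^{\mu\nu}\bigl(\Gamma^{\lambda}_{\mu\nu} - \hat{\Gamma}^{\lambda}_{\mu\nu}\bigr) = 0$, i.e.\ the gauge condition holds. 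This gives the equivalence in both directions.

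There is no real obstacle beyond the identification step at the outset: the whole lemma is a one-line substitution, and its content is that the gauge \eqref{eq:harm_cond_coord} is exactly the algebraic relation needed to swap the contracted Christoffel symbols of the dynamical metric $g$ for those of the fixed background metric $\hat{g}$. The point of the lemma, as preparation for the reduction, is that after this swap the principal part of the equation is governed by the non-degenerate wave operator $g^{\mu\nu}\partial_\mu\partial_\nu$ rather than by the degenerate expression which in \eqref{eq:membrane} hides further second-order derivatives of $F$ inside $\Gamma^{\lambda}$.
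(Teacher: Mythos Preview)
Your argument has a genuine gap in the reverse direction. The lemma asserts that \eqref{eq:mem_red} \emph{alone} is equivalent to the conjunction of \eqref{eq:membrane} and \eqref{eq:harm_cond_coord}. The nontrivial half is therefore: from \eqref{eq:mem_red} by itself, deduce the gauge condition \eqref{eq:harm_cond_coord} (and hence \eqref{eq:membrane}). You do not do this. Your ``converse'' assumes both \eqref{eq:mem_red} \emph{and} \eqref{eq:harm_cond_coord} and recovers \eqref{eq:membrane}; your subtraction argument shows only that \eqref{eq:membrane} and \eqref{eq:mem_red} agree if and only if the gauge condition holds, i.e.\ it needs \emph{both} equations as input to extract \eqref{eq:harm_cond_coord}. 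Neither step shows that a solution of the reduced equation automatically satisfies the gauge.

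The missing idea, which is the actual content of the lemma, is to take the tangential projection of \eqref{eq:mem_red} onto $\im F$. Writing \eqref{eq:mem_red} as $g^{\mu\nu}\hn^F_{\mu}\partial_{\nu}F = g^{\mu\nu}\hat{\Gamma}_{\mu\nu}^{\lambda}\partial_{\lambda}F$, the right-hand side is manifestly tangential, while the tangential part of the left-hand side is $g^{\mu\nu}\Gamma_{\mu\nu}^{\lambda}\partial_{\lambda}F$ by the very identification $(\hn^F_{\mu}\partial_{\nu}F)^{\top}_{\im F}=\Gamma^{\lambda}_{\mu\nu}\partial_{\lambda}F$ you recorded. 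Equating tangential parts and using that the $\partial_{\lambda}F$ are independent yields $g^{\mu\nu}\Gamma_{\mu\nu}^{\lambda}=g^{\mu\nu}\hat{\Gamma}_{\mu\nu}^{\lambda}$, i.e.\ \eqref{eq:harm_cond_coord}; the normal part then gives \eqref{eq:membrane}. In coordinates this is the paper's computation $g^{\mu\nu}\Gamma_{\mu\nu}^{\lambda}=g^{\kappa\lambda}h_{AD}\partial_{\kappa}F^{D}\,g^{\mu\nu}\hat{\Gamma}_{\mu\nu}^{\delta}\partial_{\delta}F^{A}=g^{\mu\nu}\hat{\Gamma}_{\mu\nu}^{\lambda}$, where \eqref{eq:mem_red} is used to replace the second-order expression for the contracted Christoffel symbols. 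This is precisely why, unlike for the Einstein equations, no separate propagation of the gauge constraint is needed here.
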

\begin{proof}
  We only need to show that the reduced membrane equation 
  \eref{eq:mem_red} implies
  \eref{eq:membrane} and condition \eref{eq:harm_cond_coord};
  the other direction follows immediately.
  Suppose the reduced membrane equation
  \eref{eq:mem_red} holds. 
  We compute the contracted Christoffel symbols with respect to\
  the  metric induced by a solution
  \begin{equation*}
    g^{\mu\nu} \Gamma_{\mu\nu}^{\lambda} 
     =  g^{\kappa\lambda} g^{\mu\nu} \hat{\Gamma}_{\mu\nu}^{\delta} \partial_{\delta}
    F^A h_{AD} \partial_{\kappa} F^D = g^{\mu\nu} \hat{\Gamma}_{\mu\nu}^{\lambda}.
  \end{equation*}
  Using this identity together with %
  \eref{eq:mem_red} gives us the desired result.
\end{proof}

\begin{rem}
  This result can be derived in a somewhat more abstract fashion
  if we use \eref{eq:mem_geom} instead of \eref{eq:membrane}.
  The reduced membrane equation \eref{eq:mem_red} can be 
  expressed
  \begin{equation*}
    g^{\mu\nu}\hn^F_{\mu} \partial_{\nu} F = g^{\mu\nu} 
  \hat{\Gamma}_{\mu\nu}^{\lambda} \partial_{\lambda} F.
  \end{equation*}
  The right hand side of the latter equation is tangential to the image
  of $F$, therefore the
  normal part of the left hand side %
  has to vanish, and the tangential part has to
  coincide with the right hand side which yields the desired identity
  \eref{eq:harm_cond_coord}.
\end{rem}

\begin{rem}
  Other gauges corresponding to the prescription of the contracted
  Christoffel symbols are possible. Therefore, we can mine the rich
  repertoire of gauges for the Einstein equations (cf.\ \cite{FriRe:2000}).
\end{rem}

\begin{rem}
  In contrast 
  to the case of the Einstein equations the reduction process needs
  no further equation. 
\end{rem}

Based on \eref{eq:membrane} we  consider the following
problem involving
existence and uniqueness. \\
Let $\Sigma_0$ be a regularly immersed $m$-dimensional 
  submanifold  of $N$
  with immersion
  $\varphi: M^m \rightarrow \Sigma_0 \subset N$. Let $\alpha$ be a function on
  $M$ called \emph{initial lapse}, and let
  $\beta$  be a vector field on $M$ called \emph{initial shift}. 
  Suppose $\nu: M \rightarrow TN$ is an unit timelike future-directed
  vector field along $\varphi$ normal to $\Sigma_0$.
  \begin{center}
  \parbox{0.93\textwidth}{
     \textbf{Existence} \\
  Find a neighborhood $W\subset \rr\times M $ 
  of $\{0\} \times M$ and
  an immersion $F:W \rightarrow N$
  solving the IVP 
  \begin{equation}
    \label{eq:param_ivp}
    H(\im F)\equiv 0,~\restr{F} = \varphi,~\restr{\partial_t F} 
    = \alpha \,\nu %
    + d\varphi(\beta)
  \end{equation}
  such that $\partial_t F$ is timelike, and the image of $F(t): M \rightarrow N$
  is spacelike. The last term consists of the differential 
  of $\varphi$ applied to the vector field $\beta$.
  The parameter $t$ denotes the first component of $\rr \times M$. \\
  \textbf{Uniqueness} \\
  Suppose $\Sigma_0$ is locally embedded, and let $\nu$ be an unit timelike
  future-directed vector field
  on $\Sigma_0$. Assume 
  $\bar{\varphi}: M \rightarrow N$ to be an immersion with 
  $\Sigma_0 = \im \bar{\varphi}$, and let $\bar{\alpha}$ and $\bar{\beta}$
  be another choice of initial lapse and shift.\\  
  Show that for immersions $F, \bar{F}: W\subset \rr\times M \rightarrow N$
  solving \eref{eq:membrane}, and attaining the initial values
  \begin{eqnarray*}
    \restr{F}  = \varphi,~\restr{\partial_t F} 
    = \alpha \,\nu \circ \varphi 
    + d\varphi(\beta), 
    \\
    \restr{\bar{F}} = \bar{\varphi},~\restr{\partial_t \bar{F}} 
    = \bar{\alpha} \,\nu \circ \bar{\varphi} 
    + d\bar{\varphi}(\bar{\beta}),
  \end{eqnarray*}
  there exists a local diffeomorphism $\Psi$ of $W$ %
  such that
  $F\circ \Psi^{-1} = \bar{F}$.   
}
\end{center} 

\begin{rem}
Throughout the following we use a \emph{special background metric} 
$\hat{g}$ defined on $\rr \times M$.
If the initial values of $F$ are %
$\restr{F} = \varphi$
and $\restr{\partial_t F} = \alpha \,\nu %
+ d\varphi(\beta)$, then
we define
\begin{equation}
  \label{eq:back_metric}
  \hat{g}  := - \alpha^2 dt^2 +
    \ig_{ij} (\beta^i dt + dx^i)(\beta^j dt + dx^j).
\end{equation}
Observe that $\ig_{ij}, \alpha$, and $\beta$ are independent of the 
parameter $t$.
\end{rem}

\section{Parametrized Immersions}
\label{indpar}
This section is devoted to discuss the Cauchy problem \eref{eq:param_ivp}.
Solutions will be obtained as parametrization through spacelike immersions.
First we present the main assumptions of this
paper. 
The main results of this section are
stated in Propositions \ref{prop:immersion_ex} and %
\ref{prop:param_uni}.

\subsection{Assumptions}
\label{assum}
We use the assumptions on the date presented in this section
if not explicitly stated otherwise. In order to get 
results which are independent of the scale of the ambient manifold 
--- multiplication
of the metric with the square of a positive constant ---
 we introduce a constant $R > 0$. The purpose of this constant is to absorb
scaling such that scale-invariant constants occur in the assumptions.

Let $s > m/2 + 1$ be an integer, and let $V\subset N$ and 
$U \subset M$
denote open sets. 
\\
\textbf{Assumptions on the ambient manifold:}
 We suppose $N$ to admit a time function $\tau$ in %
 $V$. Assume the metric $h$ and $\tau$ to be $C^{s+3}$.
 Let $\psi$ denote
  the lapse of the time foliation  induced by $\tau$ 
  defined
  by \eref{eq:def_lapse}, and let $E$ denote 
  the %
  Riemannian metric introduced in Definition \ref{defn:def_E}.

  We suppose there are constants $C_1, C_2,
  C_N$, and $C_{\tau}$ independent of $R$ such that the following 
  inequalities
  are uniformly satisfied in $V$
  \begin{eqnarray*}
    \fl C_1 \le  R^{-1}\psi \le C_2,~
       \sum_{\ell = 0}^{s+1}
       R^{2+\ell}\abs{\D^{\ell} \RRiem}_E \le C_N, 
    \quad\mbox{and}\quad 
    \sum_{\ell = 1}^{s+2} R^{1+\ell}\abs{\D^{\ell} (\D\tau)}_E 
    \le C_{\tau}, %
  \end{eqnarray*}
  where $\D(\D \tau)$ denotes the $(1,1)$-tensor
    obtained by applying the covariant derivative to the gradient of the time 
    function $\tau$. 
\\
\textbf{Assumptions on the initial submanifold:}
Assume $\varphi$ to be $C^{s+2}$. %
Let $N$ admit a time function in a neighborhood of $\varphi(U)$.
 We suppose there exist constants $\omega_1, C_{\varphi}$ 
  independent of $R$ such that the following inequalities
  are uniformly satisfied in $U$
    \begin{eqnarray*}
         \inf\bigl\{ -h( \gamma , \hT ) : \gamma \mbox{ unit timelike 
           future-directed 
           normal} \mbox{ to }\Sigma_0
      \bigr\}\le \omega_1,
      \\
      \sum_{\ell = 0}^{s} R^{\ell + 1}\abs{\hn^{\ell} \II}_{\ig,E}  \le C_{\varphi},
    \end{eqnarray*}
    where $\hT$ denotes the unit timelike future-directed normal to
    the time foliation on $N$, and 
    $\II$ denotes the second fundamental form of $\varphi$ %
    (cf.\ \eref{eq:init_2nd}).
\\
\textbf{Assumptions on the initial direction:}
Assume $\nu$ to be $C^{s+1}$.
 We suppose there exist constants $\omega_2, C_{\nu}$ independent of $R$ 
  such that the following inequalities
  are uniformly satisfied in $U$
    \begin{equation*}
      - h(\nu, \hT) \le \omega_2
      \qquad\mbox{and}\qquad
      \sum_{\ell = 1}^{s +1} R^{\ell}\babs{\hn^{\ell} \nu}_{\ig,E} 
      \le C_{\nu}. %
    \end{equation*}
\begin{rem}
  If the initial submanifold $\Sigma_0$ is locally embedded, and
  the initial direction is defined on $\Sigma_0$ rather than on 
  the domain of the embedding, 
  then 
  the assumptions on $\Sigma_0$ and $\nu$ do not depend on the 
  local embedding. %
\end{rem}
When considering the initial value problem \eref{eq:param_ivp} for immersions we
have the addinitional assumptions on initial lapse and shift.
\\
\textbf{Assumptions on initial lapse and shift:}
Assume $\alpha$ and $\beta$ to be $C^{s+1}$.
We suppose there exist constants $L, ~C_{\alpha}$, and  $C_{\beta}$ independent of
$R$ 
such that the following inequalities
  are uniformly satisfied in $U$
    \begin{eqnarray*}
      R^2(- \alpha^2 + \abs{\beta}_{\ig}^2) \le - L, 
      \\
        \sum_{\ell = 1}^{s+1} R^{\ell} \abs{\inab^{\ell} \beta}_{\ig}  \le C_{\beta},
      \quad
      \mbox{and}\quad \sum_{\ell = 0}^{s+1} R^{\ell} \abs{\inab^{\ell} \alpha}_{\ig}  
      \le C_{\alpha}. 
    \end{eqnarray*}   

\subsection{Existence}
\label{sec:diff_lapse}
The following result gives a solution to the IVP \eref{eq:param_ivp}.
For a detailed proof we refer to \cite{Mil:2008}.

\begin{prop}
  \label{prop:immersion_ex}
  Suppose for each $p\in M$ there exist a neighborhood $U \subset M$ of 
  $p$, and
  an open set $V \subset N$ %
  containing $\varphi(U)$ 
  such that the assumptions described in the previous section
  are satisfied in $V$ and $U$.

  Then there exist a neighborhood $W \subset \rr\times M$
  of $\{0\} \times M$ and a $C^2$-immersion $F:W\rightarrow N $
  such that $\partial_t F$ is timelike, and the image of $F(t):M \rightarrow N$
  is spacelike
  solving the reduced membrane equation \eref{eq:mem_red}
  with respect to\  the background metric $\hat{g}$ defined by
  \eref{eq:back_metric},
  and attaining the initial values
    $\restr{F} = \varphi$ and
    $\restr{\partial_t F} = \alpha \,\nu %
    + d\varphi(\beta)$.
\end{prop}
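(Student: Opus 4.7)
My plan is to reduce the problem to the classical local existence theory for quasilinear strictly hyperbolic systems, and then to glue the resulting local-in-time solutions by uniqueness. First I would show that at $t=0$ the pulled-back metric $g = F^{\ast} h$ agrees with the background metric $\hat g$ from \eqref{eq:back_metric}. Indeed, using $\restr{F} = \varphi$ and $\restr{\partial_t F} = \alpha\,\nu + d\varphi(\beta)$, together with the fact that $\nu$ is unit timelike and normal to $\Sigma_0$, one computes $\restr{g_{ij}} = \ig_{ij}$, $\restr{g_{0i}} = \ig_{ij}\beta^j$ and $\restr{g_{00}} = -\alpha^2 + \abs{\beta}_{\ig}^2$, i.e. $\restr{g} = \restr{\hat g}$. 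The assumption $R^2(-\alpha^2+\abs{\beta}_{\ig}^2)\le -L$ then makes $\restr{g}$ uniformly Lorentzian, which is precisely the strict hyperbolicity condition for the reduced equation \eqref{eq:mem_red} viewed as a quasilinear system with principal symbol $g^{\mu\nu}(F)\partial_\mu\partial_\nu$ acting on the $N$-coordinates $F^A$.

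Next I would localize: pick coordinate neighborhoods $U\subset M$ and $V\subset N$ and, for $p\in M$, solve \eqref{eq:mem_red} on $(-T_p,T_p)\times U_p$ as a Cauchy problem for $F^A(t,x)$ in $\mathbb{R}^{n+1}$-valued Sobolev spaces $H^{s+1}$ with $s>m/2+1$. The standard energy-method/Picard-iteration machinery for quasilinear hyperbolic systems (as in Hughes--Kato--Marsden, Majda, Taylor) then yields a unique local solution in $C([0,T);H^{s+1})\cap C^1([0,T);H^s)$. The embedding $H^{s+1}\hookrightarrow C^2$ gives the required regularity of $F$. By continuity in $t$, $\partial_t F$ remains timelike, $g$ remains Lorentzian, and each slice $F(t):U\to N$ remains spacelike on a possibly smaller time interval.

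The key point, and the main obstacle, is that $M$ is non-compact, so one must obtain a \emph{uniform} lower bound on $T_p$ independent of the patch. This is where the scale-invariant uniform bounds of Subsection \ref{assum} play the crucial role: the constants $C_1,C_2,C_N,C_\tau,C_\varphi,C_\nu,C_\alpha,C_\beta,\omega_1,\omega_2,L$ control the high-Sobolev norms of the initial data and of the coefficients of the system in a manner independent of the base point. After rescaling lengths by $R$, the energy estimates produce a time of existence that depends only on these scale-invariant quantities, hence can be chosen to be the same for all coordinate patches. The assumptions involving $\hT$ and the angles $\omega_1,\omega_2$ guarantee that the slices and the initial direction remain uniformly away from the light cone, so that $\restr{g}$ has Lorentzian signature with a uniform lower bound, preserving strict hyperbolicity uniformly along the initial submanifold.

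Finally I would glue. On the overlap of two coordinate patches, both restrictions of the locally constructed $F$ solve the same reduced equation \eqref{eq:mem_red} with respect to the same background $\hat g$ and the same initial data. Uniqueness for quasilinear hyperbolic Cauchy problems (together with finite propagation speed for $g^{\mu\nu}$) forces them to agree on a common neighborhood of $\{0\}\times(U_p\cap U_q)$. Therefore the local solutions combine into a single $C^2$-immersion $F:W\to N$ on an open neighborhood $W\subset\rr\times M$ of $\{0\}\times M$ on which $\partial_t F$ is timelike and the slices $F(t)(M)$ are spacelike, solving \eqref{eq:mem_red} and attaining the prescribed initial values. Detailed verification of the energy estimates and the scaling bookkeeping is deferred to \cite{Mil:2008}.
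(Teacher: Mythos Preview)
Your approach matches the paper's: the paper defers the detailed proof to \cite{Mil:2008}, but Remark~\ref{rem:ex_loc} makes clear that the strategy is to solve the reduced equation \eref{eq:mem_red} locally near each point via standard quasilinear hyperbolic theory and then glue by local uniqueness, exactly as you outline. One minor point: the uniform lower bound on $T_p$ that you emphasize is not needed for the proposition as stated (the hypotheses are only pointwise-local and the conclusion asks merely for \emph{some} neighborhood $W$ of $\{0\}\times M$); that uniform bound is rather the content of Remark~\ref{rem:W_T}.
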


\begin{rem}
  \label{rem:ex_loc}
  The proof involves solving the reduced membrane equation 
  \eref{eq:mem_red} in a neighborhood of each point $ p \in M$. Standard 
  local uniqueness
  results for hyperbolic equations then show that two local solutions coincide
  on their common domain. Therefore, a spatially local version of 
  Proposition 
  \ref{prop:immersion_ex} is also valid.
\end{rem}

\begin{rem}
\label{rem:smooth}
  Let $\ell_0$ be an integer, and suppose $s > m/2 + 1 + \ell_0$.
  Then the solution $F$ is of class $C^{2 + \ell_0}$.
\end{rem}

\begin{rem}
  \label{rem:W_T}
  If the assumptions are uniform in $p$, then the domain $W$ of the solution
  has the form $[-RC_0,RC_0] \times M$ with a constant $C_0 > 0$ independent
  of the scale.
\end{rem}

\subsection{Uniqueness}
Let $\Sigma_0$ be a locally embedded submanifold of $N$.
Suppose $\varphi, \bar{\varphi} : M \rightarrow N$ are two immersions
satisfying the conditions of Definition \ref{defn:loc_emb} for $\Sigma_0$.
The metrics induced on $M$ by $\varphi, \bar{\varphi}$ we denote
by $\ig, \mathring{\bar{g}}$, respectively.
Let $\psi_0: M \rightarrow M$ denote the local diffeomorphism defined by
  $\varphi\circ \psi_0^{-1} = \bar{\varphi}$.
The differential of $\psi_0$ is a member of $T^{\ast} M \otimes TM$,
and we use the following norm
\begin{equation*}
  \abs{d\psi_0}_{\ig,\mathring{\bar{g}}}^2 = 
  \ig^{ij} \mathring{\bar{g}}_{k\ell} (d\psi_0)_i^k (d\psi_0)_j^{\ell}. 
\end{equation*}
Norms of higher-order derivatives of $\psi_0$ are defined analogously.
Let $\alpha, \bar{\alpha}> 0$ be two functions on $M$, and
let $\beta, \bar{\beta}$ be two vector fields on $M$.
Assume $\nu$ to be %
defined on $\Sigma_0$.

We obtain the following uniqueness result.
\begin{prop}
  \label{prop:param_uni}
  Let $\varphi, \alpha, \beta$ and $ \bar{\varphi},
  \bar{\alpha}, 
      \bar{\beta}$ satisfy the assumptions of sections \ref{assum}.
  Suppose there
  exist constants $C_1^{\psi}, C_2^{\psi}$ independent of $R$ such that
  \begin{equation}
    \label{eq:cond_initial_diffeo}
   \abs{d\psi_0}_{\ig,\mathring{\bar{g}}} \le C_1^{\psi}\quad
    \mbox{and}\quad R\abs{d^2 \psi_0}_{\ig,\mathring{\bar{g}}} \le C_2^{\psi}.
  \end{equation}
  Let $F, \bar{F}: W \subset \rr \times M \rightarrow N$ be two %
  $C^{s+2}$-solutions of the membrane equation \eref{eq:membrane} 
  attaining the initial values
  \begin{eqnarray*}
    \restr{F}  = \varphi,~\restr{\partial_t F} 
    = \alpha \,\nu \circ \varphi 
    + d\varphi(\beta), 
    \\
    \restr{\bar{F}} = \bar{\varphi},~\restr{\partial_t \bar{F}} 
    = \bar{\alpha} \,\nu \circ \bar{\varphi} 
    + d\bar{\varphi}(\bar{\beta}).
  \end{eqnarray*}
  Let $\hat{\bar{g}}$ be
  the background  metric defined by \eref{eq:back_metric} using the 
  initial values of $\bar{F}$.
  Assume $\bar{F}$ to be in 
  harmonic
  map gauge with respect to\   $\hat{\bar{g}}$.
  
  Then there exists a local diffeomorphism $\Psi$ of $M$
  such that $F \circ \Psi^{-1}$ and $\bar{F}$ coincide on a neighborhood
  of $\{0\} \times M$.
\end{prop}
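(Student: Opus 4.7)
My plan is to adapt the standard gauge-fixing argument used for geometric hyperbolic systems such as the Einstein equations (cf.\ \cite{FriRe:2000}). Since $F$ is only assumed to satisfy the geometric membrane equation \eref{eq:membrane} and need not be in any particular gauge, it cannot be compared directly with $\bar F$, which solves the reduced equation \eref{eq:mem_red} with respect to $\hat{\bar g}$. Instead, I will construct a local diffeomorphism $\Psi$ so that $\tilde F := F\circ \Psi^{-1}$ is itself in the harmonic map gauge with respect to $\hat{\bar g}$ and attains the same initial data as $\bar F$; local uniqueness for the strictly hyperbolic reduced equation then finishes the argument.

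The diffeomorphism $\Psi$ is obtained as the solution to the wave map equation for $\Psi:(W,g)\to(W,\hat{\bar g})$, where $g=F^\ast h$; in coordinates,
\begin{equation*}
  g^{\mu\nu}\bigl(\partial_\mu\partial_\nu \Psi^\lambda
  - \Gamma(g)^\sigma_{\mu\nu}\,\partial_\sigma \Psi^\lambda
  + \hat{\bar\Gamma}^\lambda_{\alpha\beta}(\Psi)\,\partial_\mu \Psi^\alpha\,\partial_\nu \Psi^\beta \bigr) = 0.
\end{equation*}
This is a strictly hyperbolic quasilinear system for the $\rr\times M$-valued function $\Psi=(\Psi^0,\Psi^i)$, locally solvable at the Sobolev regularity inherited from $F$. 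I prescribe $\Psi\bigr|_{t=0}(y) = (0,\psi_0(y))$ so that $\tilde F\bigr|_{t=0}=\bar\varphi$, and choose $\partial_t\Psi\bigr|_{t=0}$ so that $\partial_t\tilde F\bigr|_{t=0}=\partial_t\bar F\bigr|_{t=0}$. Applying the chain rule to $\partial_t\tilde F=dF\cdot\partial_t\Psi^{-1}$ and splitting the matching condition into parts normal and tangential to $\Sigma_0$ reduces the prescription of $\partial_t\Psi\bigr|_{t=0}$ to a pointwise linear problem whose solvability follows from $\alpha,\bar\alpha>0$. The bound \eref{eq:cond_initial_diffeo} on $d\psi_0$ then guarantees that $d\Psi\bigr|_{t=0}$ is invertible uniformly, so $\Psi$ is a local diffeomorphism of $W$ near $\{0\}\times M$.

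A standard computation shows that $\Psi:(W,g)\to(W,\hat{\bar g})$ being a wave map is equivalent to $\id:(W,\tilde g)\to(W,\hat{\bar g})$, with $\tilde g=\tilde F^\ast h=(\Psi^{-1})^\ast g$, being harmonic; hence $\tilde F$ lies in the harmonic map gauge with respect to $\hat{\bar g}$. Since the membrane equation \eref{eq:membrane} is invariant under reparametrization, $\tilde F$ also satisfies it, and by Lemma \ref{lem:equiv} it therefore satisfies the reduced equation \eref{eq:mem_red} with background $\hat{\bar g}$. Both $\tilde F$ and $\bar F$ now solve the same strictly hyperbolic system with identical initial data, and the local uniqueness statement for such systems invoked in Remark \ref{rem:ex_loc} yields $\tilde F=\bar F$ on a neighborhood of $\{0\}\times M$.

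The main obstacle is the second step: the explicit determination of $\partial_t\Psi\bigr|_{t=0}$, because $\Psi$ couples the time coordinate to the spatial coordinates and one must simultaneously arrange the correct matching of the tangential and normal components of $\partial_t\tilde F\bigr|_{t=0}$ against $\bar\alpha\,\nu\circ\bar\varphi+d\bar\varphi(\bar\beta)$ and secure the invertibility of $d\Psi\bigr|_{t=0}$; this is exactly where hypothesis \eref{eq:cond_initial_diffeo} enters. A subsidiary issue is to verify that the geometric assumptions of section \ref{assum}, which are formulated purely in terms of the data $(\varphi,\alpha,\beta,\nu)$ and $(\bar\varphi,\bar\alpha,\bar\beta,\nu)$, translate into bounds sufficient for invoking the hyperbolic theory both for the wave-map problem and for the reduced membrane equation.
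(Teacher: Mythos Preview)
Your approach is essentially the one taken in the paper: construct $\Psi$ as a wave map $(W,g)\to(\rr\times M,\hat{\bar g})$ with initial data chosen so that $F\circ\Psi^{-1}$ matches the initial values of $\bar F$, observe that this places $F\circ\Psi^{-1}$ in harmonic map gauge with respect to $\hat{\bar g}$, and conclude by uniqueness for the reduced equation. Two small corrections: since $g=F^\ast h$ is already known, the wave map equation for $\Psi$ is \emph{semilinear} rather than quasilinear (the paper exploits this simplification explicitly), and the final uniqueness step should cite Proposition~\ref{prop:mem_red_uni} rather than Remark~\ref{rem:ex_loc}, which concerns existence.
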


From \cite{Mil:2008} we have the following uniqueness result
for solutions to the IVP \eref{eq:param_ivp}, which are in harmonic map
gauge with respect to\ the background metric defined by \eref{eq:back_metric}.
\begin{prop}
  \label{prop:mem_red_uni}
  Let the assumptions of section \ref{assum} be satisfied.
  Assume $F:W \rightarrow N$ and $\bar{F}: \widetilde{W} \rightarrow N$
  to be two immersions of class $C^2$ solving
  the IVP \eref{eq:param_ivp}. Suppose both solutions  are in harmonic map 
  gauge with respect to\
  the background metric defined by \eref{eq:back_metric}.

  Then there exists a neighborhood of $\{0\} \times M$, on which the
  two solutions $F$ and $\bar{F}$ coincide.
\end{prop}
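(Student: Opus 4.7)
The approach is to reduce to the strictly hyperbolic quasilinear system \eqref{eq:mem_red} via Lemma \ref{lem:equiv}. The background metric $\hat{g}$ from \eqref{eq:back_metric} depends only on the common initial data $\ig,\alpha,\beta$ and hence is identical for both solutions. Since both $F$ and $\bar F$ satisfy the membrane equation \eqref{eq:membrane} (they solve the IVP \eqref{eq:param_ivp}) and are in harmonic map gauge with respect to this common $\hat g$, Lemma \ref{lem:equiv} yields that both solve the reduced membrane equation \eqref{eq:mem_red}. Their Cauchy data $\restr{F}=\restr{\bar F}=\varphi$ and $\restr{\partial_t F}=\restr{\partial_t \bar F}=\alpha\,\nu+d\varphi(\beta)$ coincide by the assumption that they solve the same IVP.

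Because $\partial_t F$ is timelike and each slice $F(t)(M)$ is spacelike (and likewise for $\bar F$), the principal symbols $g^{\mu\nu}(F)=(F^{\ast}h)^{\mu\nu}$ and $\bar g^{\mu\nu}(\bar F)$ have Lorentzian signature in a neighborhood of $\{0\}\times M$, so \eqref{eq:mem_red} is strictly hyperbolic along both solutions. Since the claim is local, I would cover $M$ by coordinate patches $U_{\alpha}$ with $\varphi(U_{\alpha})$ contained in a single coordinate chart of $N$, restrict to $t$ small enough that both $F$ and $\bar F$ map a neighborhood of $\{0\}\times U_{\alpha}$ into this chart, and view \eqref{eq:mem_red} there as a quasilinear wave system for the components of $F,\bar F$ on an open subset of $\rr^{n+1}$. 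The standard uniqueness theorem for $C^2$-solutions of strictly hyperbolic quasilinear systems with matching Cauchy data then gives $F=\bar F$ on a spacetime neighborhood of $\{0\}\times U_{\alpha}$, and a patching argument using finite propagation speed and a partition of unity on $M$ yields coincidence on a full neighborhood of $\{0\}\times M$.

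The local uniqueness step I would carry out by writing $G=F-\bar F$ in the common coordinate chart and deriving a linear second-order equation for $G$ whose principal part is $g^{\mu\nu}(F)\partial_{\mu}\partial_{\nu}G^A$ and whose lower-order coefficients and forcing are linear combinations in $G$ and $\partial G$ with $C^0$-coefficients assembled from $F,\bar F$ and their first derivatives (the key identity $g^{\mu\nu}(F)-g^{\mu\nu}(\bar F)$ being expressible in terms of $\partial G$ with $C^0$-coefficients). Since $G$ and $\partial_t G$ vanish at $t=0$, the classical $g(F)$-energy identity applied on a lens-shaped region bounded above by a small spacelike slice, together with Gronwall's inequality, forces $G\equiv 0$. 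The main obstacle is the coefficient regularity: with $F,\bar F$ only in $C^2$, the coefficients of the difference equation are merely $C^1$, so the integration by parts producing the energy identity requires some care, particularly in handling derivatives of $g^{\mu\nu}(F)$. A clean alternative is to rewrite \eqref{eq:mem_red} as a symmetric hyperbolic first-order system in $(F,\partial F)$ and invoke the classical uniqueness theorem for such systems directly.
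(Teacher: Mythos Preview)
Your approach is correct and is the natural one: reduce via Lemma~\ref{lem:equiv} to the strictly hyperbolic system \eqref{eq:mem_red} with identical Cauchy data, then invoke local uniqueness for quasilinear hyperbolic equations (via energy estimates on the difference, or equivalently a first-order symmetric hyperbolic reformulation) and patch using finite propagation speed. The paper itself does not give a proof of Proposition~\ref{prop:mem_red_uni}; it simply cites the author's thesis \cite{Mil:2008}. What you outline is precisely the standard argument one would expect there, and the surrounding text (Remark~\ref{rem:uni_loc} on uniqueness holding on cones, and the repeated appeal to ``standard local uniqueness results for hyperbolic equations'' in Remarks~\ref{rem:ex_loc} and~\ref{rem:uni_loc}) confirms this is the intended route.

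Two minor points worth tightening. First, the principal coefficient $g^{\mu\nu}$ depends on $F$ through $h\circ F$ and on $\partial F$ through the pullback; your parenthetical already accounts for this, but make sure the difference $g^{\mu\nu}(F,\partial F)-g^{\mu\nu}(\bar F,\partial\bar F)$ is written as a $C^0$-coefficient times $(G,\partial G)$, not just $\partial G$. Second, the regularity issue you flag is real but benign: with $F,\bar F\in C^2$ the coefficients of the linear equation for $G$ are $C^1$ (and $C^0$ for the zero-order terms), which is exactly enough for the energy identity since only one derivative of the principal coefficient appears after integration by parts. The symmetric-hyperbolic alternative you mention is equally clean.
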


\begin{rem}
  If the domains $W$ and $\widetilde{W}$ have the form
  $[-RC_0,RC_0] \times M$, then there exists a scale-invariant
  constant $C$ such that $F$ and $\bar{F}$ coincide
  for $-RC \le t \le RC$.
\end{rem}

\begin{rem}
  \label{rem:uni_loc}
  As is the case for the existence result stated in Proposition 
  \ref{prop:immersion_ex},
  a spatially local version of Proposition \ref{prop:mem_red_uni}
  holds as well. From uniqueness results for hyperbolic equations, we 
  obtain
  that uniqueness holds on cones.
\end{rem}
The previous result relies on the fact that both solutions are in harmonic
map gauge with respect to\  the same background metric.
The following proposition gives a condition under which it is possible
to align the harmonic map gauge of two solutions. 
\begin{prop}
  \label{prop:cond_diffeo}
  Suppose in the situation of Proposition \ref{prop:param_uni}
  there exist open sets 
  $W,\widetilde{W} \subset \rr\times M$ such that for a fixed point 
  $p \in M$ it holds
  $(0,p) \in W$ and $\bigl(0,\psi_0(p)\bigr) \in \widetilde{W}$. 
  Suppose there exists a diffeomorphism
  $\Psi  : W \rightarrow \widetilde{W}$
  such that
     \begin{equation}
       \label{eq:diffeo_harm}
     \Psi: \bigl(W, g= F^{\ast}h\bigr)
     \rightarrow \bigl(\widetilde{W}, \hat{g}
     \bigr) 
     \mbox{ is a harmonic map}.
   \end{equation}
      Furthermore, assume the inverse to satisfy the initial conditions
   \begin{eqnarray}
     \label{eq:def_inverse_trafo_initial}
     &&\Psi^{-1}\bigr|_{\tilde{t} = 0} = (0,\psi_0^{-1}) \quad\mbox{and}\quad
     \partial_{\tilde{t}} \Psi^{-1}\bigr|_{\tilde{t} = 0}
     = \hat{\lambda} \partial_t + \hat{\chi}
     \\ 
     \fl\mbox{with }\qquad&&
     \hat{\lambda}(p) = \case{\bar{\alpha}(p)}{\alpha(\psi_0^{-1}(p))}
     \quad\mbox{and} \quad
     \hat{\chi}(p) = d(\psi_0^{-1})_p(\bar{\beta}) 
     - \hat{\lambda}(p) \beta[\psi_0^{-1}(p)]. \nonumber
   \end{eqnarray}
   Then $F\circ \Psi^{-1}$ and $\bar{F}$ coincide on a neighborhood of
   $(0,p)$.
 \end{prop}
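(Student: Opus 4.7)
The plan is to turn $\bar F$ into a reparametrization of $F$: define $\tilde F := F \circ \Psi^{-1} : \widetilde W \to N$ and verify three things — $\tilde F$ solves the membrane equation, it lies in harmonic map gauge with respect to the background metric $\hat{\bar g}$, and it attains exactly the initial data of $\bar F$. Once these are established, Proposition \ref{prop:mem_red_uni} (or its spatially local version from Remark \ref{rem:uni_loc}) applied to $\tilde F$ and $\bar F$ yields the desired coincidence on a neighborhood of $(0,p)$.

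The first point is immediate: the membrane equation is invariant under diffeomorphisms of the parameter domain, so $\tilde F$ still satisfies $H(\im \tilde F) \equiv 0$. For the gauge claim, let $\tilde g := \tilde F^{\ast} h = (\Psi^{-1})^{\ast}(F^{\ast} h)$, so that $\Psi : (W, F^{\ast} h) \to (\widetilde W, \tilde g)$ is an isometry by construction. Writing $\Psi = \id \circ \Psi$ and applying the composition formula for the tension field, the hypothesis \eref{eq:diffeo_harm} forces
\begin{equation*}
  \tau\bigl(\id : (\widetilde W, \tilde g) \to (\widetilde W, \hat{\bar g})\bigr) = 0,
\end{equation*}
which in coordinates is precisely the harmonic map gauge condition \eref{eq:harm_cond_coord} for $\tilde F$ with background $\hat{\bar g}$. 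This is the main conceptual step; everything else is bookkeeping.

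Next I would verify the initial values. The assumption $\Psi^{-1}|_{\tilde t = 0} = (0, \psi_0^{-1})$ gives $\tilde F|_{\tilde t = 0} = \varphi \circ \psi_0^{-1} = \bar\varphi$ directly. For the time derivative, the chain rule at $\tilde t = 0$ yields
\begin{equation*}
  \partial_{\tilde t} \tilde F = dF \bigl(\partial_{\tilde t} \Psi^{-1}\bigr) = \hat\lambda \,(\partial_t F) \circ \Psi^{-1} + d\varphi(\hat\chi);
\end{equation*}
substituting $\partial_t F|_{t=0} = \alpha\,\nu\circ\varphi + d\varphi(\beta)$ together with the defining formulas for $\hat\lambda$ and $\hat\chi$ in \eref{eq:def_inverse_trafo_initial}, the term $\hat\lambda\,\alpha\,\nu \circ \varphi \circ \psi_0^{-1}$ collapses to $\bar\alpha\,\nu\circ\bar\varphi$, while the two shift contributions combine, via the chain rule identity $d\bar\varphi = d\varphi \circ d\psi_0^{-1}$, into $d\bar\varphi(\bar\beta)$. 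Hence $\tilde F$ realizes the initial data $(\bar\varphi, \bar\alpha, \bar\beta)$ prescribed for $\bar F$.

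Both $\tilde F$ and $\bar F$ therefore solve the IVP \eref{eq:param_ivp} for identical initial data and sit in harmonic map gauge with respect to the same background metric $\hat{\bar g}$, so Proposition \ref{prop:mem_red_uni} applies and yields $\tilde F = \bar F$ on a neighborhood of $(0,p)$. The main obstacle I anticipate is step three: although the reduction of the gauge for $\tilde F$ is conceptually transparent, the definitions of $\hat\lambda$ and $\hat\chi$ in \eref{eq:def_inverse_trafo_initial} must be tracked carefully to see that they encode exactly the transformation law for initial lapse and shift under the reparametrization $\psi_0$ — this is what makes the initial-value matching work without any further hypotheses beyond \eref{eq:cond_initial_diffeo}.
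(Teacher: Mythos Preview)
Your proposal is correct and follows essentially the same route as the paper's proof: show that $F\circ\Psi^{-1}$ is in harmonic map gauge with respect to $\hat{\bar g}$, verify by direct chain-rule computation that its initial data coincide with those of $\bar F$, and conclude via Proposition~\ref{prop:mem_red_uni}. Your treatment of the gauge step (factoring $\Psi$ through the isometry $(W,g)\to(\widetilde W,\tilde g)$ and using the composition law for the tension field) is in fact more explicit than the paper's, which simply asserts that the harmonic map equation for $\Psi$ yields the gauge condition.
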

 \begin{proof}
   From the harmonic map equation satisfied by $\Psi$ we derive the
   condition for the solution $F \circ \Psi^{-1}$ of the membrane
   equation to be in harmonic map gauge with respect to\  the background metric 
   $\hat{\bar{g}}$ (cf.\ \eref{eq:harm_cond}). 

    It remains to show that the initial values of $\Psi^{-1}$
   give us the appropriate initial values for $F \circ \Psi^{-1}$.
   From the definition of $\psi_0$ we derive that 
   $\restr{F \circ \Psi^{-1}} = \bar{\varphi}$.
   To obtain the initial value for the velocity we compute
   \begin{eqnarray*}
         \partial_{\bar{t}}(F   \circ \Psi^{-1})\bigr|_{\tilde{t} = 0}(p) 
     &=& \hat{\lambda}(p) 
     \alpha[\psi_0^{-1}(p)] \nu\circ \varphi[\psi_0^{-1}(p)]
     + \hat{\lambda}(p) 
     d\varphi_{\psi_0^{-1}(p)}(\beta) 
     \\
     &&{}+ d(\varphi \circ \psi_0^{-1})_p(\bar{\beta})  
     - \hat{\lambda}(p) d\varphi_{\psi_0^{-1}(p)}(\beta)
   \end{eqnarray*}
   Since the second and the last term cancel, the claim follows from
   the definition of the initial values of $\Psi^{-1}$.
   Therefore,  $F \circ \Psi^{-1}$ and $\bar{F}$ satisfy the reduced membrane
   equation \eref{eq:mem_red} with respect to\  the background metric
   $\hat{\bar{g}}$ attaining the initial values of $\bar{F}$.   
   From Proposition \ref{prop:mem_red_uni} we obtain that both immersions 
   coincide locally.
 \end{proof}

To obtain Proposition \ref{prop:param_uni} 
it remains to show existence of a local diffeomorphism satisfying
condition \eref{eq:diffeo_harm}. Since the metrics $g = F^{\ast} h$ and
$\hat{g}$ are Lorentzian, the harmonic map equation for $\Psi$
has a structure similar to the reduced membrane equation \eref{eq:mem_red}.
The main difference is that the harmonic map equation is semilinear.
Hence, the same techniques can be applied as to obtain  a solution,
and we have the following result.
\begin{prop}
  \label{prop:ex_uni_diffeo}
  In the situation of
  Proposition \ref{prop:param_uni}
  there exist a neighborhood $\widehat{W}\subset \rr \times M$ of
  $\{0\} \times M$ and a
  $C^2$-immersion 
  $\Psi: \widehat{W}\rightarrow \rr \times M$ satisfying
  \eref{eq:harm_map} with $(M_1, g_1) = (W, g ) $ and 
  $(M_2, g_2) = (\rr \times M, 
  \hat{\bar{g}})$.
Furthermore,
$\Psi$ is invertible in a neighborhood of $\{0\} \times M$, and
the inverse
  attains the initial values 
  \eref{eq:def_inverse_trafo_initial}.
\end{prop}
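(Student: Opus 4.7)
The strategy is to view the harmonic map equation \eref{eq:harm_map} for $\Psi$ from $(W,g)$ to $(\rr\times M,\hat{\bar g})$ as a semilinear hyperbolic system on $\Psi$ and to solve it by the same short-time scheme that underlies Proposition \ref{prop:immersion_ex}, then recover $\Psi^{-1}$ from the inverse function theorem. Because $\Sigma = \im F$ is timelike by Proposition \ref{prop:immersion_ex}, the pulled-back metric $g = F^\ast h$ is Lorentzian, so the principal part $g^{\mu\nu}\partial_\mu\partial_\nu$ of \eref{eq:harm_map} is a wave operator. Moreover, $F$ is in harmonic map gauge with respect to $\hat g$, so $g^{\mu\nu}\Gamma(g)_{\mu\nu}^\lambda = g^{\mu\nu}\hat\Gamma_{\mu\nu}^\lambda$, and the PDE for $\Psi$ becomes
\begin{equation*}
g^{\mu\nu}\partial_\mu\partial_\nu \Psi^a - g^{\mu\nu}\hat\Gamma_{\mu\nu}^\lambda \partial_\lambda \Psi^a + g^{\mu\nu}\partial_\mu\Psi^b \partial_\nu\Psi^c \hat{\bar\Gamma}(\Psi)_{bc}^a = 0,
\end{equation*}
which is semilinear in $\Psi$: the coefficients $g^{\mu\nu}$ and the lower-order term $g^{\mu\nu}\hat\Gamma_{\mu\nu}^\lambda$ are determined by $F$ and by its initial data, and are uniformly controlled by the bounds arising from Proposition \ref{prop:immersion_ex}.

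The Cauchy data for $\Psi$ on $\{t=0\}$ are dictated by \eref{eq:def_inverse_trafo_initial}. I would set $\Psi|_{t=0}(q) = (0,\psi_0(q))$ and $\partial_t\Psi|_{t=0}(q) = \mu(q)\,\partial_{\tilde t} + \eta(q)$ with $\eta(q)$ tangent to $M$, and determine $\mu,\eta$ from the identity $d\Psi\cdot\partial_{\tilde t}\Psi^{-1}|_{\tilde t=0} = \partial_{\tilde t}$ evaluated at $p=\psi_0(q)$. A short linear-algebra computation then yields $\mu(q) = \alpha(q)/\bar\alpha(\psi_0(q))$ and $\eta(q)$ as an affine combination of $\beta(q)$, $d\psi_0$, and $\bar\beta\circ\psi_0$. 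The bounds \eref{eq:cond_initial_diffeo} on $d\psi_0, d^2\psi_0$, together with the assumptions of section \ref{assum} on $\alpha,\bar\alpha,\beta,\bar\beta$, furnish the $R$-scaled Sobolev bounds on $(\Psi,\partial_t\Psi)|_{t=0}$ required as input to the existence theorem.

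Existence is then obtained by essentially the same energy and contraction argument used for the reduced membrane equation in Proposition \ref{prop:immersion_ex} (see \cite{Mil:2008}): freeze $\Psi$ in the zeroth-order term $\hat{\bar\Gamma}(\Psi)$, solve the resulting linear wave equation with prescribed Lorentzian metric $g$ on a uniform time interval of the form $[-RC_0,RC_0]\times M$, and iterate. This is strictly easier than the membrane case since the principal coefficients $g^{\mu\nu}$ are prescribed by the already constructed $F$ and do not depend on the unknown $\Psi$. Finally, at $(0,q)$ the Jacobian of $\Psi$ sends $\partial_t\mapsto\mu(q)\partial_{\tilde t}+\eta(q)$ and $\partial_i\mapsto(d\psi_0)_i^k\partial_{\tilde k}$, which is invertible because $\mu(q)>0$ and $d\psi_0$ is nondegenerate by \eref{eq:cond_initial_diffeo}. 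The inverse function theorem then produces a $C^2$ local inverse $\Psi^{-1}$ on a neighborhood of $\{0\}\times M$, and the prescription of $\partial_t\Psi|_{t=0}$ was arranged precisely so that inverting the Jacobian at $t=0$ reproduces \eref{eq:def_inverse_trafo_initial}. The main obstacle is the $R$-uniform tracking of Sobolev norms and existence times that ensures $\widehat W$ can be taken in the form $[-RC_0,RC_0]\times M$ matching the domains of $F$ and $\bar F$; once this book-keeping is in place, semilinearity reduces the rest to a direct adaptation of Proposition \ref{prop:immersion_ex}.
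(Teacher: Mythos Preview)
Your proposal is correct and follows exactly the approach the paper indicates: the text preceding the proposition notes that, since $g=F^{\ast}h$ and $\hat{\bar g}$ are Lorentzian, the harmonic map equation for $\Psi$ has the same structure as the reduced membrane equation \eref{eq:mem_red} but is semilinear, so the techniques behind Proposition~\ref{prop:immersion_ex} apply directly; the paper gives no further details. One small slip: in the situation of Proposition~\ref{prop:param_uni} only $\bar F$, not $F$, is assumed to be in harmonic map gauge, so the first-order coefficient in your displayed equation should be $g^{\mu\nu}\Gamma(g)_{\mu\nu}^{\lambda}$ rather than $g^{\mu\nu}\hat\Gamma_{\mu\nu}^{\lambda}$; this does not affect the argument, since these coefficients are still determined by the given $F$ and the equation remains semilinear in $\Psi$.
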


\section{Geometric results}
\label{main}
In this section we  consider existence and uniqueness for the membrane
equation in purely geometric terms.
An existence result is obtained including a geometric notion of 
time of existence of a solution. The uniqueness result shows that
two submanifolds solving the IVP coincide on a neighborhood
of the initial submanifold.
\subsection{Existence}
The next definition gives a notion of ``time of existence'' which respects the 
geometric
behavior of a solution; %
note this does not necessarily coincide with the time parameter
of Proposition \ref{prop:immersion_ex}.
\begin{defn}[Time of existence]
  \label{def:proper_time}
  Let $\Sigma$ be a solution of the IVP \eref{eq:geom_problem}.
  The \emph{time of existence} $\tau_{\Sigma}$ of $\Sigma$ is given by
  \begin{eqnarray}
      \tau_{\Sigma} :=
      \inf_{p \in \Sigma_0}\sup\{\mbox{\textup{length }} 
      \mbox{\textup{of all timelike }} & \mbox{\textup{future-directed}} \nonumber
        \\
        &
      \mbox{\textup{curves in $\Sigma$ emanating from }} p \}. \nonumber
  \end{eqnarray}
\end{defn}
Our aim is an existence result for the IVP \eref{eq:geom_problem} which
includes a lower bound on the time of existence. Uniform assumptions imposed on 
ambient manifold,
initial submanifold, and initial direction provide us with
such a lower bound. 
\begin{thm}
  \label{thm:main_ex}
 Let $\rho > 0$ be a constant. 
 Suppose for each $q \in \Sigma_0$ there exists a neighborhood $V \subset N$
 of $q$ such that
   $B^E_{R\rho}(q) \subset V$, and the assumptions of section \ref{assum}
   are satisfied in $V$ and $\varphi^{-1}(V \cap \Sigma_0)$ %
  with constants independent of $q$.

  Then there exists an open $(m + 1)$-dimensional regularly immersed
  Lorentzian submanifold
  $\Sigma$ of class $C^2$ satisfying the IVP \eref{eq:geom_problem}.
  Furthermore, there exists a scale-invariant constant $\delta>0$ such that
  \begin{equation*}
    \tau_{\Sigma} \ge R\,\delta. %
  \end{equation*}
\end{thm}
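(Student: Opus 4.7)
The plan is to realize $\Sigma$ as the image of a parametrized immersion produced by Proposition \ref{prop:immersion_ex}, to patch the local immersions together using the uniqueness of Proposition \ref{prop:mem_red_uni}, and to obtain the lower bound on $\tau_\Sigma$ by estimating the length of the timelike $t$-curves. First I would fix a geometrically natural choice of initial lapse and shift on $M$ satisfying the assumptions of section \ref{assum}, for instance $\alpha \equiv 1$ and $\beta \equiv 0$; then $R^2(-\alpha^2 + \abs{\beta}_{\ig}^2) = -R^2 \le -1$ for $R \ge 1$, and the covariant derivative bounds hold trivially. For every $p \in M$ the hypotheses of Proposition \ref{prop:immersion_ex} are satisfied on a neighborhood $U_p$ of $p$ and on the $E$-ball of radius $R\rho$ around $\varphi(p)$, producing an immersion $F_p : W_p \rightarrow N$ solving the reduced membrane equation \eref{eq:mem_red} with respect to the background metric $\hat g$ of \eref{eq:back_metric}, which is a single metric on $\rr \times M$ determined by the global initial data. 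Since the constants are uniform in $p$, Remark \ref{rem:W_T} furnishes a scale-invariant $C_0 > 0$ such that $W_p \supset [-RC_0, RC_0] \times U_p$.

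Next I would glue these local pieces. Each $F_p$ is in harmonic map gauge with respect to the same background metric $\hat g$, so the spatially local uniqueness of Proposition \ref{prop:mem_red_uni} (cf.\ Remark \ref{rem:uni_loc}) forces $F_p = F_q$ on intersections, after possibly shrinking the $t$-interval by a scale-invariant factor. Hence the family assembles into a single $C^2$-immersion $F : W \rightarrow N$ with $W \supset [-RC, RC] \times M$ for some scale-invariant $C > 0$. Set $\Sigma := F(W)$; since $\partial_t F$ is timelike and $dF(TM)$ is spacelike, $\Sigma$ is an open $(m+1)$-dimensional regularly immersed Lorentzian submanifold of class $C^2$. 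The three conditions of \eref{eq:geom_problem} are then immediate: $H(\Sigma) \equiv 0$ by Lemma \ref{lem:equiv}, $\Sigma_0 \subset \Sigma$ from $\restr{F} = \varphi$, and $\nu$ tangent to $\Sigma$ from $\restr{\partial_t F} = \alpha\,\nu + d\varphi(\beta)$.

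For the lower bound on $\tau_\Sigma$, fix $q \in \Sigma_0$ and a preimage $p \in \varphi^{-1}(q)$, and consider the future-directed timelike curve $\gamma_p(t) := F(t,p)$ for $t \in [0, RC]$. At $t=0$ one computes $g(\partial_t F, \partial_t F) = -\alpha^2 + \abs{\beta}_{\ig}^2 = -1$. The uniform $C^2$-bounds on $F$ inherent in the construction of Proposition \ref{prop:immersion_ex} propagate this to a scale-invariant lower bound $\abs{\partial_t F}_g \ge 1/2$ on $[0, RC']$ for some scale-invariant $C' \in (0,C]$. Hence the length of $\gamma_p$ is at least $RC'/2 =: R\delta$, and taking the infimum over $p$ yields $\tau_\Sigma \ge R\delta$.

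The main obstacle I anticipate is the gluing step: the spatially local uniqueness must extend to a uniform time slab $[-RC, RC] \times M$ in spite of the non-compactness of $\Sigma_0$. Using the single global background metric $\hat g$, together with the harmonic-map-gauge uniqueness of Proposition \ref{prop:mem_red_uni} rather than a mere domain-of-dependence argument, is what makes the gluing work uniformly across all of $M$, and the fact that the existence interval from Remark \ref{rem:W_T} is of scale-invariant size is what converts this into the claimed lower bound on $\tau_\Sigma$.
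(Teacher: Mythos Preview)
Your proposal is correct and follows essentially the same route as the paper: choose $\alpha\equiv 1$, $\beta\equiv 0$, apply Proposition~\ref{prop:immersion_ex} together with the uniform time slab of Remark~\ref{rem:W_T}, set $\Sigma:=\im F$, and bound $\tau_\Sigma$ from below via the length of the timelike curves $t\mapsto F(t,p)$. The only substantive difference is organizational: Proposition~\ref{prop:immersion_ex} already yields a \emph{single} immersion $F$ defined on a neighborhood of all of $\{0\}\times M$, because the local-to-global patching via hyperbolic uniqueness is absorbed into its proof (cf.\ Remark~\ref{rem:ex_loc}); the paper therefore dispenses with the gluing paragraph you spell out and simply quotes the proposition once. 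Your explicit gluing argument is thus redundant rather than wrong. A minor slip: the restriction $R\ge 1$ you insert to verify the lapse/shift assumption is not in the paper and is unnecessary---the paper just records $\alpha=1$, $\beta=0$ without further comment, since $R$ is a fixed scale throughout.
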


\begin{proof} %
  From Proposition \ref{prop:immersion_ex} and Remark \ref{rem:W_T} we obtain
  an immersion $F:[-T,T] \times M\rightarrow N$ of class $C^2$
  solving the IVP \eref{eq:param_ivp} with initial lapse equal 1
  and initial shift equal 0. 
  Letting $\Sigma:= \im F$
  gives us a regularly immersed timelike submanifold solving the
  IVP \eref{eq:geom_problem}.

  The construction of $F$ provides us with a lower bound on the time
  parameter $T$, and an estimate for the timelikeness of $\partial_t F$. Hence, a 
  lower bound
  exists for the length of the timelike curves $t \mapsto F(t,p)$ for
  all $p \in M$.
\end{proof}

\begin{rem}
  \label{rem:ex_main_non_uniform}
  The proof shows that the theorem applies to the situation where
  the assumptions are valid only locally.
\end{rem}
The following corollaries show that a solution to the IVP 
\eref{eq:geom_problem}
for the membrane equation can be constructed in such a way that 
the immersion type %
of the initial submanifold is preserved.
\begin{cor}
  \label{cor:main_ex_loc_emb}
  Let $\Sigma_0$ be locally 
  embedded; for each point $q \in \Sigma_0$ let $U_q\subset M$ and $V_q
  \subset N$ 
  denote
  sets satisfying the conditions \eref{eq:defn_loc_emb} of Definition 
  \ref{defn:loc_emb}.
  Suppose that for each point $q \in \Sigma_0$ 
  the assumptions of section \ref{assum}
   are satisfied in  $V_q$ and $U_q$. 

  Then there exists a locally embedded timelike $(m+1)$-dimensional
  submanifold $\Sigma$
  of class $C^2$ solving the IVP \eref{eq:geom_problem}.
\end{cor}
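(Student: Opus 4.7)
The plan is to construct $\Sigma$ from the spatially local existence and uniqueness results, applied on the charts witnessing the local embeddedness of $\Sigma_0$, and then inherit the local embedding property on a sufficiently small time interval. For each $q \in \Sigma_0$ I would apply the spatially local version of Proposition \ref{prop:immersion_ex} (cf.\ Remark \ref{rem:ex_loc}) with initial lapse $\alpha \equiv 1$ and shift $\beta \equiv 0$ on $U_q$, obtaining a $C^2$-immersion $F_q: W_q \rightarrow N$ on a neighborhood $W_q \subset \rr\times U_q$ of $\{0\}\times U_q$. Local uniqueness (Proposition \ref{prop:mem_red_uni} together with Remark \ref{rem:uni_loc}) guarantees that whenever $U_q \cap U_{q'} \neq \emptyset$, the immersions $F_q$ and $F_{q'}$ coincide on the overlap of their domains, since the background metrics defined by \eref{eq:back_metric} agree there. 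Patching the $F_q$ yields a $C^2$-immersion $F$ on a neighborhood of $\{0\}\times M$ in $\rr\times M$, whose image $\Sigma$ is a regularly immersed timelike $(m+1)$-dimensional submanifold solving \eref{eq:geom_problem}, as in Theorem \ref{thm:main_ex}.

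The main step is to upgrade $\Sigma$ from regularly immersed to locally embedded. Fix $q \in \Sigma_0$ and let $p_q \in U_q$ be the unique preimage of $q$ under the diffeomorphism $\varphi|_{U_q}: U_q \rightarrow V_q \cap \Sigma_0$. Since $F_q(0,\cdot) = \varphi|_{U_q}$ is an embedding and $F_q$ is a $C^2$-immersion with $\partial_t F_q$ timelike and hence transverse to $\varphi(U_q)$, the inverse function theorem yields an open neighborhood $U_q'$ of $p_q$ with compact closure in $U_q$, together with an $\epsilon > 0$, such that $F_q$ is an embedding on $A := (-\epsilon,\epsilon) \times U_q'$. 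Using $\varphi^{-1}(V_q \cap \Sigma_0) = U_q$, I would then shrink $V_q$ to a smaller neighborhood $V_q'$ of $q$ with $V_q' \cap \Sigma_0 \subset \varphi(U_q')$; establishing the identity $F^{-1}(V_q' \cap \Sigma) \cap W = A$ after a further shrinking of $V_q'$ gives the local embedding condition at $q$. Points $q \in \Sigma \setminus \Sigma_0$ are then handled analogously after flowing back along the timelike curve $t \mapsto F(t,p)$ to $\Sigma_0$ and invoking the case just settled.

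The main obstacle is the identity $F^{-1}(V_q' \cap \Sigma) \cap W = A$, which amounts to ruling out that distant portions of $F$ enter the neighborhood $V_q'$. The natural argument is by contradiction: a sequence $(t_k, p_k) \in W$ with $F(t_k,p_k) \to q$ but $(t_k,p_k)$ staying outside every neighborhood of $(0,p_q)$ would, using continuity of $F$ and boundedness of the $t$-coordinate on the local pieces $W_q$, force $t_k \to 0$ and hence $\varphi(p_k) \to q$; the local embedding of $\Sigma_0$ at $q$ would then place $p_k \in U_q$ eventually with $p_k \to p_q$, contradicting the assumption on the sequence. The delicate point is that $M$ is non-compact, so the necessary compactness must be drawn from the local embedding hypothesis on $\Sigma_0$ together with the uniform behaviour of $F$ on the compact set $[-\epsilon,\epsilon] \times \overline{U_q'}$; making this extraction precise is what the detailed proof has to carry out.
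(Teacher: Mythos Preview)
Your approach matches the paper's: local solutions via Remark~\ref{rem:ex_loc} with $\alpha\equiv1$, $\beta\equiv0$, the inverse function theorem to make each $F_q$ an embedding near $(0,\varphi^{-1}(q))$, and uniqueness on overlaps (Remark~\ref{rem:uni_loc}) to patch. The one organizational difference is that the paper, after shrinking each $W_q$ so that $F_q$ is an embedding with values in $V_q$ and $W_q\cap(\{0\}\times M)\subset U_q$, passes to a \emph{locally finite} subcovering of $M$ by the slices $W_{q_\lambda}\cap(\{0\}\times M)$ and only then defines the global map $F$ as the union of these local embeddings. The paper then asserts local embeddedness of $\im F$ directly from this locally finite structure, without the sequential contradiction argument you isolate as the main obstacle. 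Your concern is legitimate, and the step ``$t_k\to 0$'' in your sketch does not follow from mere boundedness of $t$ together with $F(t_k,p_k)\to q$ (the image at positive time could in principle revisit a neighborhood of $\Sigma_0$); the locally finite refinement is the paper's device in place of this step, reducing the question near any point to finitely many embedded pieces.
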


\begin{proof}
  A local solution $F_q$ to the IVP \eref{eq:param_ivp} with initial
  lapse equal 1 and initial shift equal 0
   can be obtained by a local version of Proposition
  \ref{prop:immersion_ex} (cf.\ Remark \ref{rem:ex_loc}).
  This solution is defined on a neighborhood of 
  $\bigl(0,\varphi^{-1}(q)\bigr)$
  with values in $V_q$. From the inverse function theorem we obtain
  that the solution is an embedding in a neighborhood of 
  $\bigl(0,\varphi^{-1}(q)\bigr)$. By shrinking the domain $W_q$
  of $F_q$  further we achieve 
  that the set $W_q\cap ( \{0\} \times M)$
  lies in $U_q$. %
  Consider the family $(W_q)_q$ of the domains of all local solutions. 
  The intersection
  with $\{0\} \times M$ provides a covering of $M$.
  Choose
  points $q$ such that the intersection of the family of domains of the local 
  solutions
  with $\{0\} \times M$ form a locally finite covering subordinate to the
  above covering.
  Define a mapping which coincides with the local solutions on each
  set belonging to the locally finite subcovering above. 
  From a local version of Proposition \ref{rem:uni_loc} we derive that local 
  solutions
  coincide on their common domain.
  Therefore, this mapping is a well-defined immersion, 
  whose image is a locally embedded timelike submanifold of class $C^2$.
\end{proof}
To obtain a solution for regularly immersed initial submanifolds with
locally
finite intersections it is necessary to change the way in which the local 
solutions are pieced together.
\begin{cor}
  \label{cor:main_ex_finite}
   Let $\Sigma_0$ be regularly immersed with locally finite
  intersections; for each point $q \in \Sigma_0$ let $U_{q,\ell}\subset M$ and 
  $V_q \subset N$ denote
  sets satisfying the conditions \eref{eq:defn_fin_loc_emb} of Definition 
  \ref{defn:fin_loc_emb}.
  Suppose that for each point $q \in \Sigma_0$
 the assumptions of section \ref{assum}
   are satisfied in  $V_q$ and $U_{q,\ell}$ for every $\ell$.

  Then there exists a timelike $(m+1)$-dimensional
  regularly immersed submanifold $\Sigma$ with locally finite intersections 
  of class $C^2$ solving the IVP \eref{eq:geom_problem}.
\end{cor}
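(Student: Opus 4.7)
The strategy is to mimic the proof of Corollary \ref{cor:main_ex_loc_emb}, but to build the immersion $F: W \subset \rr \times M \to N$ by pasting together local solutions associated to each of the finitely many branches $U_{q,\ell}$ meeting a neighborhood $V_q$ of $q \in \Sigma_0$, rather than merging them into a single embedded piece.

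First, for every $q \in \Sigma_0$ and every index $\ell$, apply the local version of Proposition \ref{prop:immersion_ex} (cf.\ Remark \ref{rem:ex_loc}) with initial lapse equal to $1$ and initial shift equal to $0$ on the open set $U_{q,\ell}\subset M$. This yields a $C^2$-immersion $F_{q,\ell}$ defined on an open neighborhood $W_{q,\ell}\subset \rr\times M$ of $\{0\}\times U_{q,\ell}$, satisfying the reduced membrane equation with values in $V_q$. Since at $t=0$ the restriction $F_{q,\ell}|_{t=0}$ equals $\varphi|_{U_{q,\ell}}$, which is a diffeomorphism onto its image, and since $\partial_t F_{q,\ell}|_{t=0}$ is transverse to this image, the inverse function theorem permits us to shrink each $W_{q,\ell}$ so that $F_{q,\ell}$ becomes an embedding onto an open subset of $V_q$, with $W_{q,\ell}\cap(\{0\}\times M)$ lying inside $\{0\}\times U_{q,\ell}$.

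Next, extract from $(W_{q,\ell})_{q,\ell}$ a locally finite subcover of $\{0\}\times M$ in $\rr\times M$, taking advantage of the disjointness of the sets $U_{q,\ell}$ for fixed $q$. On overlaps $W_{q,\ell}\cap W_{q',\ell'}$ in $\rr\times M$, both solutions are in harmonic map gauge with respect to the same background metric \eref{eq:back_metric} (since the initial lapse, shift, and the initial data $\varphi$, $\nu$ on $M$ are common) and so the local uniqueness statement \ref{prop:mem_red_uni} together with Remark \ref{rem:uni_loc} guarantees that they agree on their common domain. Consequently, the pointwise union $F$ of the chosen local pieces is a well-defined $C^2$-immersion on the open set $W := \bigcup W_{q,\ell}$ into $N$, solving the reduced membrane equation on each chart and thus the membrane equation on its image.

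Finally, we must verify that $\Sigma := \im F$ is regularly immersed with locally finite intersections. Given a point $\tilde q\in\Sigma$, choose $q\in\Sigma_0$ and $V_q$ with $\tilde q\in V_q$; then $F^{-1}(V_q)$ is covered by finitely many of the embedded pieces $W_{q,\ell}$, each of which is mapped diffeomorphically to an open subset of $V_q$ by construction. After shrinking the ambient neighborhood around $\tilde q$ if necessary, the preimage splits as a disjoint finite union of open sets on which $F$ is an embedding, which is exactly the defining condition \eref{eq:defn_fin_loc_emb}. The main subtlety, and the step requiring the most care, is precisely this last verification: one must ensure, by shrinking the neighborhoods $W_{q,\ell}$ symmetrically in $t$ and in the $M$-direction, that distinct branches do not accidentally develop new intersections in the time evolution, which would violate local finiteness. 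This is handled by exploiting the uniform continuity of $F$ near $t=0$ together with the finiteness of branches through each initial point $q$.
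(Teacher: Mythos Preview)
Your overall strategy---solving locally on each branch $U_{q,\ell}$, shrinking to embeddings, and gluing via local uniqueness---matches the paper's. The difference, and the source of a gap, lies in how you organize the covering and verify the locally finite intersection property for $\Sigma$.

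You extract a locally finite subcover of $\{0\}\times M$ on the \emph{domain} side and then try to check the finite-intersection condition in the target. But your claim that ``$F^{-1}(V_q)$ is covered by finitely many of the embedded pieces $W_{q,\ell}$'' is not justified: the global $F$ is assembled from pieces $W_{q_j,\ell_j}$ coming from the locally finite subcover, and for $q_j \neq q$ nothing you have arranged prevents $F_{q_j,\ell_j}(W_{q_j,\ell_j})$ from meeting $V_q$. Local finiteness in $\rr\times M$ does not, by itself, bound the number of branches whose images land near a given point of $N$. Your closing appeal to ``uniform continuity of $F$ near $t=0$'' gestures at the right phenomenon but does not close the argument.

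The paper reverses the order of operations: it first shrinks each $V_q$ in the \emph{target} to a smaller $\tilde V_q \subset N$ with the property that
\[
  (F_{q,\ell})^{-1}\bigl(\tilde V_q \cap \im F_{q,\ell}\bigr) = W_{q,\ell}
  \quad\mbox{for all }\ell,
\]
and only then extracts a locally finite covering of $\Sigma_0$ by the sets $\tilde V_{q_\lambda}\cap\Sigma_0$ on the target side. Pulling this cover back through $\varphi$ yields the locally finite cover of $M$, and---crucially---the target-side shrinking together with the target-side local finiteness is what makes the verification that $\Sigma=\im F$ has locally finite intersections go through ``by construction''. Reorganizing your argument along these lines fills the gap.
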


\begin{proof}
  For each $q \in \Sigma_0$ we pick the finitely many $U_{q,\ell}$
  satisfying \eref{eq:defn_fin_loc_emb}, and
  solve the membrane equation in $U_{q,\ell}$ with values in $V_q$.
  We shrink the domain of the solutions $F_{q,\ell}$ to obtain embeddings.
  By shrinking the 
  set $V_q $ to a subset $\tilde{V}_q \subset N$ we achieve
  that 
  \begin{equation*}
    \bigl(F_{q,\ell}\bigr)^{-1}\bigl(\tilde{V}_q \cap \im F_{q,\ell}\bigr)
    = W_{q,\ell} \subset \dom(F_{q,\ell}) \subset \rr \times U_{q,\ell}
    \quad\mbox{for all } \ell.
  \end{equation*}
  Consider the family $\mathcal{U} = 
  \bigl(\tilde{V}_q \cap \Sigma_0 \bigr)_{q \in \Sigma_0}$.
  Choose a locally finite covering $\bigl(\tilde{V}_{q_{\lambda}} \cap 
  \Sigma_0 \bigr)_{\lambda \in \Lambda}$
  of $\Sigma_0$
  subordinate to $\mathcal{U}$.
  Let $\tilde{U}_{q,\ell}$ denote the part of $W_{q,\ell}$ which belongs
  to $\{0\} \times M$.
  Consider the family $\bigl(\tilde{U}_{q,\ell} \bigr)_{q\in \Sigma_0}$.
  Then the family $(\tilde{U}_{q_{\lambda}, \ell})_{\lambda}$ is a locally finite
  covering of $M$ subordinate to $\bigl(U_{q,\ell} \bigr)_{q \in \Sigma_0}$
  due to the finiteness of the sets $U_{q, \ell}$ for fixed
  $q$.
  
  Let $F$ be a mapping defined by $F_{q,\ell}$ on $\tilde{U}_{q,\ell}$.
  Since the local solutions coincide on common domains, this mapping is 
  well-defined.   
  By construction it follows that $\Sigma := \im F$
  is  regularly immersed with locally finite intersections 
  and is of class $C^2$.
\end{proof}
We show that smooth data lead to a smooth solution of the IVP
\eref{eq:geom_problem} for the membrane equation again respecting the 
immersion type of 
the initial submanifold.
\begin{cor}
  \label{cor:main_ex}
  Assume $(N,h)$ to be  smooth, and suppose
  $\Sigma_0$ is either smoothly %
  \begin{enumerate}
  \item regularly immersed, 
  \item locally embedded, or
  \item regularly immersed  with locally finite intersections.
  \end{enumerate}
  Suppose $N$ admits a smooth time function $\tau$ in a neighborhood 
  of the initial submanifold $\Sigma_0$. Assume  $\nu$ to be smooth.
  
  Then there exists an open smooth $(m + 1)$-dimensional timelike
  submanifold $\Sigma$ which is
  \begin{enumerate}
  \item regularly immersed, 
  \item locally embedded, or 
  \item regularly immersed  with locally finite 
    intersections, 
  \end{enumerate}
  satisfying the IVP \eref{eq:geom_problem}.
\end{cor}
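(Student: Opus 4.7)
The plan is to bootstrap the existence results already established (Theorem \ref{thm:main_ex}, Corollary \ref{cor:main_ex_loc_emb}, Corollary \ref{cor:main_ex_finite}) using the higher-regularity version of Proposition \ref{prop:immersion_ex} recorded in Remark \ref{rem:smooth}, and then to invoke local uniqueness (Proposition \ref{prop:mem_red_uni} together with Remark \ref{rem:uni_loc}) to identify solutions of different differentiability classes and paste them into a globally defined smooth submanifold respecting the immersion type of $\Sigma_0$.

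First I would fix a point $q \in \Sigma_0$ and a relatively compact neighborhood on which all the data are smooth. Because $h, \tau, \varphi, \nu$ are smooth, the quantitative assumptions of section \ref{assum} are satisfied on this neighborhood for every integer $s > m/2 + 1$, with constants $C_N, C_\tau, C_\varphi, C_\nu$ that may depend on $s$ (each of them bounds only finitely many derivatives of the data). Given any integer $k \ge 0$, I would pick $s$ with $s > m/2 + 1 + k$ and apply the existence statement from section \ref{main} appropriate to the immersion type of $\Sigma_0$ (Theorem \ref{thm:main_ex} in case (i), Corollary \ref{cor:main_ex_loc_emb} in case (ii), Corollary \ref{cor:main_ex_finite} in case (iii)) at this level of regularity. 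By Remark \ref{rem:smooth} this yields a local immersion $F_k$ of class $C^{2+k}$ solving the parametrized Cauchy problem \eref{eq:param_ivp} with initial lapse $1$ and initial shift $0$, whose image is an open timelike $(m+1)$-dimensional submanifold of the same immersion type as $\Sigma_0$.

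Next I would identify these solutions across regularity classes. Both $F_k$ and $F_{k'}$ (for any $k' > k$) satisfy the reduced membrane equation \eref{eq:mem_red} in harmonic map gauge with respect to the \emph{same} background metric \eref{eq:back_metric} (which is determined by the initial data alone and is independent of $s$) and attain the same initial values. Since $F_{k'}$ is in particular of class $C^{2+k}$, the local version of Proposition \ref{prop:mem_red_uni} noted in Remark \ref{rem:uni_loc} forces $F_k$ and $F_{k'}$ to coincide on a common neighborhood of $\{0\} \times M$. Hence a single local mapping is of class $C^{2+k}$ for every $k \ge 0$, i.e.\ smooth. Pasting these smooth local pieces together by exactly the construction used in the corresponding $C^2$-corollary then gives the desired smooth submanifold $\Sigma$, and the immersion type of $\Sigma_0$ is preserved because the pasting procedure preserves it.

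The main obstacle is verifying that the pasting constructions from the proofs of Theorem \ref{thm:main_ex}, Corollary \ref{cor:main_ex_loc_emb}, and Corollary \ref{cor:main_ex_finite} still carry through once the local pieces are taken to be smooth rather than merely $C^2$. This is essentially bookkeeping: one must check that the locally finite coverings and the mutual identifications of local solutions used in those proofs are unaffected by the choice of $s$, so that the smoothness obtained patch-by-patch transfers to the globally pasted object. Since uniqueness on each overlap is pointwise and the regularity classes are nested, this step reduces to observing that each local solution in the lower-regularity construction may simply be replaced by its smooth refinement produced above.
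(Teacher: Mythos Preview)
Your proposal is correct and follows essentially the same approach as the paper: observe that smoothness of the data means the assumptions of section~\ref{assum} hold for every $s > m/2 + 1 + \ell_0$, apply the existence result appropriate to the immersion type (Theorem~\ref{thm:main_ex} with Remark~\ref{rem:ex_main_non_uniform}, or Corollaries~\ref{cor:main_ex_loc_emb} and~\ref{cor:main_ex_finite}), and use Remark~\ref{rem:smooth} to obtain a $C^{2+\ell_0}$ solution for each $\ell_0$. The paper's proof stops there and leaves implicit the identification of these solutions across regularity classes; you spell out explicitly, via Proposition~\ref{prop:mem_red_uni} and Remark~\ref{rem:uni_loc}, why the resulting solutions coincide and hence assemble into a single smooth object---this is a welcome clarification rather than a different argument.
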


\begin{proof}
  For each integer $\ell_0$ it follows from the 
  smoothness
  of $h, \tau$, the immersion $\varphi$, and the initial direction that
  the assumptions of section \ref{assum} 
  are satisfied for $s > m/2 + 1 + \ell_0$ in
  a neighborhood of each point $q \in \Sigma_0$.
  Depending on the immersion type of $\Sigma_0$ 
  we apply Theorem 
  \ref{thm:main_ex}
  and Remark \ref{rem:ex_main_non_uniform},
  or Corollaries
  \ref{cor:main_ex_loc_emb} and \ref{cor:main_ex_finite}.
  We
  obtain a
  solution $\Sigma$ of class $C^{2 + \ell_0}$ by taking Remark 
  \ref{rem:smooth} into account.
\end{proof}

\subsection{Uniqueness}
In this section we consider the geometric uniqueness problem.
In Proposition \ref{prop:param_uni} it was shown that the construction of
a solution made in Proposition \ref{prop:immersion_ex} is independent of the 
choice of 
immersion of the
initial submanifold, initial lapse, and initial shift. Therefore, it remains to
construct an immersion of an arbitrary solution to \eref{eq:geom_problem}
which is in harmonic map gauge
with respect to\ the background metric defined by the initial values.%
\begin{thm}
  \label{thm:main_uni}
    Assume $(N,h)$ to be smooth, and suppose
  $\Sigma_0$ is smooth locally embedded.
  Let $N$ admit a smooth time function $\tau$ in a neighborhood 
  of the initial submanifold $\Sigma_0$. Suppose the initial direction
  $\nu$ is  smooth.

  Let $\Sigma_1$ and $\Sigma_2$ be two open smooth $(m + 1)$-dimensional
  locally embedded
  Lorentzian submanifolds of $N$ solving the IVP \eref{eq:geom_problem}.
  
  Then there exists a neighborhood $\Sigma_0 \subset V \subset N$ of $\Sigma_0$
  such that
  \begin{equation*}
    V \cap \Sigma_1 = V \cap \Sigma_2.
  \end{equation*}
\end{thm}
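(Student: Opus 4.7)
The strategy is to reduce geometric uniqueness to the parametrized statement of Proposition~\ref{prop:param_uni} by constructing, locally near each $q\in\Sigma_0$, common-data parametrizations of $\Sigma_1$ and $\Sigma_2$ and a third harmonic-gauge solution to which both can be identified.

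Fix $q\in\Sigma_0$ and a smooth local embedding $\varphi\colon U\to N$ of $\Sigma_0$ near $q$. Since each $\Sigma_i$ is a smooth locally embedded Lorentzian submanifold containing $\Sigma_0$ with $\nu$ tangent, I would extend $\nu$ smoothly to a vector field $X_i$ on a neighborhood of $\varphi(U)$ inside $\Sigma_i$, and set
\begin{equation*}
  F_i(t,p):=\Phi_t^{X_i}\bigl(\varphi(p)\bigr),
\end{equation*}
where $\Phi^{X_i}$ denotes the flow of $X_i$ on $\Sigma_i$, composed with the inclusion $\Sigma_i\hookrightarrow N$. Spacelikeness of $\Sigma_0$ together with timelikeness and transversality of $\nu$ guarantee, on a sufficiently small common neighborhood $W$ of $\{0\}\times\varphi^{-1}(q)$, that $F_i\colon W\to N$ is a smooth immersion with $\partial_t F_i$ timelike and $F_i(t,\cdot)$ spacelike, attaining the initial values $\restr{F_i}=\varphi$, $\restr{\partial_t F_i}=\nu\circ\varphi$ (initial lapse $\alpha\equiv 1$, initial shift $\beta\equiv 0$). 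Because $\im F_i=\Sigma_i$ satisfies $H(\Sigma_i)\equiv 0$, each $F_i$ is a $C^{s+2}$-solution (in fact smooth) of the full membrane equation \eref{eq:membrane}. Smoothness of $h$, $\tau$, $\varphi$, and $\nu$ ensures, after choosing the scale $R>0$ appropriately and shrinking $W$, that the uniform assumptions of section~\ref{assum} hold.

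Now apply Proposition~\ref{prop:immersion_ex} to obtain an auxiliary $C^2$-immersion $\bar F\colon W'\to N$ solving the reduced membrane equation \eref{eq:mem_red} with respect to the background metric $\hat{\bar g}$ from \eref{eq:back_metric} built out of the very same initial data $(\varphi,1,0)$; by construction $\bar F$ is in harmonic map gauge with respect to $\hat{\bar g}$. Since I use the same embedding $\varphi$ for the two parametrizations, the transition diffeomorphism $\psi_0$ equals the identity and condition \eref{eq:cond_initial_diffeo} is trivially fulfilled. Applying Proposition~\ref{prop:param_uni} in turn to the pairs $(F_1,\bar F)$ and $(F_2,\bar F)$ yields local diffeomorphisms $\Psi_1,\Psi_2$ of $M$ such that
\begin{equation*}
  F_i\circ\Psi_i^{-1}=\bar F\qquad(i=1,2)
\end{equation*}
on a neighborhood of $\{0\}\times\varphi^{-1}(q)$. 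In particular $\im F_1$ and $\im F_2$ both agree with $\im\bar F$ near $q$, so $\Sigma_1\cap V_q=\Sigma_2\cap V_q$ for some open $V_q\ni q$, and $V:=\bigcup_{q\in\Sigma_0}V_q$ completes the argument.

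The main obstacle is the construction of the local parametrizations $F_i$ in the first step: one must verify that the flow-based map is a genuine immersion into $N$ with image equal to $\Sigma_i$ near $q$, hits the prescribed initial velocity $\nu\circ\varphi$ on the nose (rather than merely up to a tangential correction), and that the regularity and uniform bounds of section~\ref{assum} for the resulting initial data $(\varphi,1,0)$ are indeed satisfied on a common $W$ for both $i=1,2$. All of these are routine consequences of smoothness and local compactness, but must be set up carefully so that Proposition~\ref{prop:param_uni} can be applied verbatim; once this bookkeeping is in place the theorem follows from the parametrized results already developed.
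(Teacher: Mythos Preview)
Your proposal is correct and follows essentially the same route as the paper: parametrize each $\Sigma_i$ near $\Sigma_0$ with initial data $(\varphi,1,0)$, compare both to an auxiliary harmonic-gauge solution with the same data, and invoke Proposition~\ref{prop:param_uni}. The only variation is that the paper constructs the parametrizations $F_i$ via geodesics of the induced metric on $\Sigma_i$ (Gaussian normal coordinates, packaged as Proposition~\ref{prop:ex_embed}) rather than the flow of an arbitrary smooth extension of $\nu$; both constructions serve the same purpose and produce the same initial values.
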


\begin{rem}
  In view of the fact that the image of a solution to \eref{eq:param_ivp}
  is a solution to \eref{eq:geom_problem} (cf.\ proof of Theorem 
  \ref{thm:main_ex}), Theorem \ref{thm:main_uni} and Proposition 
  \ref{prop:param_uni} show that
  initial lapse and shift can be given freely.
\end{rem}
Our strategy to prove this theorem is to compare an arbitrary solution
with the solution constructed in the previous section.
To apply the uniqueness result of Proposition  \ref{prop:param_uni} we need 
to construct
an immersion satisfying the IVP \eref{eq:param_ivp}.
\begin{prop}
  \label{prop:ex_embed}
  Let $(N,h),~\Sigma_0$, and $\nu$ satisfy the assumptions of Theorem
  \ref{thm:main_uni}.
  Let $\Sigma$ be a smooth locally embedded 
  solution
  to the IVP \eref{eq:geom_problem}.

  Then there exists an immersion  $F:W \subset
  \rr \times M \rightarrow N$ with $\Sigma \supset \im F  $  a locally 
  embedded submanifold.
  Furthermore, $F$ has the properties 
  that $\partial_t F$ is timelike, and $F(t): M \rightarrow N$ has a spacelike
  image. The initial values of $F$ are given by
  $\restr{F} = \varphi$ and $\restr{\partial_t F}
  = \nu \circ \varphi$. 
\end{prop}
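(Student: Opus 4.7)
The plan is to construct $F$ by flowing the initial embedding $\varphi$ along a timelike vector field on $\Sigma$ that extends the initial direction $\nu$. Since the IVP \eref{eq:geom_problem} requires $\nu$ to be tangent to $\Sigma$, and $\Sigma_0 \subset \Sigma$ is locally embedded (as $\Sigma_0$ is locally embedded in $N$ and contained in the smooth locally embedded submanifold $\Sigma$), we may regard $\nu \circ \varphi$ as a smooth timelike section of $T\Sigma$ along $\Sigma_0$. Using a tubular neighborhood of $\Sigma_0$ in $\Sigma$ (or a smooth partition of unity on $\Sigma$), I would extend this to a smooth vector field $X$ tangent to $\Sigma$ on a neighborhood $U \subset \Sigma$ of $\Sigma_0$. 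Since timelikeness is an open condition, after shrinking $U$ we may assume $X$ is timelike throughout $U$.

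Next, letting $\Phi: W_0 \subset \rr \times U \to \Sigma$ denote the flow of $X$, I would set
\[
F(t,p) := \Phi\bigl(t, \varphi(p)\bigr)
\]
on an appropriate neighborhood $W \subset \rr \times M$ of $\{0\} \times M$. By construction, $\restr{F} = \varphi$ and $\restr{\partial_t F} = X \circ \varphi = \nu \circ \varphi$. The identity $\partial_t F = X \circ F$ then gives timelikeness of $\partial_t F$ at $t=0$, which persists for small $t$ by continuity; likewise the image of $F(t): M \to N$ is spacelike at $t=0$ (it equals $\Sigma_0$) and stays spacelike for small $t$. At each $(0,p)$, the vector $dF_{(0,p)}(\partial_t) = \nu(\varphi(p))$ is timelike and normal to $T_{\varphi(p)}\Sigma_0$, while $dF_{(0,p)}$ restricted to $T_pM$ coincides with $d\varphi_p$, whose image lies in the spacelike subspace $T_{\varphi(p)}\Sigma_0$. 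These together span an $(m+1)$-dimensional subspace, so $dF_{(0,p)}$ is injective and $F$ is an immersion on a neighborhood of $\{0\} \times M$.

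Finally, since both $W$ and $\Sigma$ have dimension $m+1$, the immersion $F: W \to \Sigma$ is locally a diffeomorphism, so $\im F$ is open in $\Sigma$; as an open subset of a locally embedded submanifold of $N$, it is itself locally embedded, as required. The only subtle step is the first: producing a smooth extension of $\nu$ that is \emph{tangent to} $\Sigma$ and remains timelike on a full neighborhood of $\Sigma_0$ inside $\Sigma$. Smoothness of $\Sigma$ and local embeddedness of $\Sigma_0$ in $\Sigma$ make this a standard tubular-neighborhood extension, and openness of the timelike cone in the tangent bundle then supplies the timelikeness on a neighborhood. Once $X$ is in hand, the remaining verifications are routine consequences of the flow construction and dimension counting.
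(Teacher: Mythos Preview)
Your argument is correct, but it takes a different route from the paper. You extend $\nu$ to a timelike vector field $X$ on a neighborhood of $\Sigma_0$ inside $\Sigma$ and then flow $\varphi$ along $X$. The paper instead defines $F(t,p)$ directly as the geodesic in $\Sigma$ with initial point $\varphi(p)$ and initial velocity $\nu\circ\varphi(p)$, i.e.\ it uses the normal exponential map of $\Sigma_0$ in $\Sigma$ (Gaussian normal coordinates). The geodesic construction has the advantage of being canonical: no extension of $\nu$ is needed, so the somewhat delicate step of producing a smooth timelike extension tangent to the locally embedded $\Sigma$ is avoided entirely, and the properties that $\partial_t F$ is timelike and $F(t,\cdot)$ is spacelike follow from the standard Gauss lemma picture. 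Your flow-based approach is equally valid and perhaps conceptually simpler for readers less comfortable with Gaussian coordinates, but it does rely on the extension step you flag as ``the only subtle step''; the paper's choice sidesteps that subtlety altogether.
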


\begin{proof}
  Let $p \in M$, and let $\gamma_{\varphi(p)}(t)$ be a geodesic in $\Sigma$
  attaining
  the initial values $\gamma_{\varphi(p)}(0) = \varphi(p)$ and
  $\dot{\gamma}_{\varphi(p)}(0) = \nu \circ \varphi(p)$.
  Set $F(t,p) = \gamma_{\varphi(p)}(t)$. Then
  $F$ is an immersion, since $\varphi$ is assumed to be an immersion,
  and $\nu$ is assumed to be unit timelike.
  The initial values of $F$ follow from the initial values of the geodesics.
  From the smoothness of the initial data it follows that $F$ is also
  smooth.

  The construction above is similar to Gaussian coordinates. 
  In an analogous way %
  it follows that $\partial_t F$ is timelike, and $F(t): M \rightarrow N$ has a 
  spacelike image. From the same argument we derive that
  the geodesics do not cross in a neighborhood of any point
  in $\Sigma_0$ as long as $\varphi$ is an embedding, and
  $\Sigma$ is an embedded submanifold around that point.
\end{proof}
We are now in the position to give a proof of the main uniqueness result.
\\
\textbf{Proof of Theorem \ref{thm:main_uni}:}\\
  Let $F_0$ be the solution to  the IVP \eref{eq:param_ivp}
  with initial values $\restr{F_0} = \varphi$ and $\restr{\partial_t F_0}
  = \nu \circ \varphi$ constructed in Corollary \ref{cor:main_ex}.
  Let $\widehat{\Sigma} := \im F_0$ denote the locally embedded image
  of the solution $F_0$.
  We compare a smooth solution $\Sigma$ to the
  IVP \eref{eq:geom_problem}
  with the solution $\widehat{\Sigma}$.

  From Proposition \ref{prop:ex_embed} we obtain that $\Sigma$ admits an 
  immersion $F:W \subset \rr \times M \rightarrow \Sigma \subset N$ with 
  locally embedded
  image, and initial values
  $\restr{F} = \varphi$ and $\restr{\partial_t F} = \nu \circ \varphi$
  in a neighborhood of the initial submanifold $\Sigma_0$.

  Proposition \ref{prop:param_uni} now yields that there is a local 
  diffeomorphism
  $\Psi$ such that $F \circ \Psi^{-1}$ and $F_0$ coincide.
  Hence, the desired result follows.
\hfill \opensquare %

\ack
The author would like  to thank Gerhard Huisken for interesting discussions
and encouragement, 
and Paul Allen for helpful assistance.

\section*{References}

\bibliographystyle{unsrt}

\addcontentsline{toc}{section}{References}

\bibliography{notes} 

\end{document}